\newtheorem{theorem}{Theorem}[section]
\newtheorem{corollary}[theorem]{Corollary}
\newtheorem{proposition}[theorem]{Proposition}
\theoremstyle{definition}
\newtheorem{definition}[theorem]{Definition}
\newtheorem{example}[theorem]{Example}
\numberwithin{equation}{section}
\newcommand{\M}{\mathcal{M}}
\newcommand{\No}{\mathcal{N}}
\newcommand{\SN}{\mathcal{S}_N}
\newcommand{\Pa}{\Lambda_{N}}
\newcommand{\dist}{\mathit{dist}}
\newcommand{\Pois}{\mathit{Poisson}}
\newcommand{\Prob}{\mathrm{P}}
\newcommand{\mypath}{\mathcal{P}_k}
\newcommand{\D}{\mathbf{d}}
\newcommand{\s}{\mathbf{s}}
\newcommand{\tr}{\mathrm{tr}}
\newcommand{\reg}{\mathrm{reg}}
\newcommand{\C}{\mathbb{C}}
\newcommand{\N}{\mathbb{N}}
\newcommand{\ii}{\mathbf{i}}
\newcommand{\sgn}{\mathrm{sgn}}
\newcommand{\take}{\!\setminus\!}
\newcommand{\unsignedpermutation}{
	\pgfmatharray{\sp}{0}\let\n\pgfmathresult
	\pgfmathparse{360.0/\n}\let\segment\pgfmathresult
	\pgfmathparse{\segment/2}\let\shift\pgfmathresult
	\def\radius{1.5cm}
	\def\labelrad{1.8cm}
	\def\regionboundaryin{1.4cm}
	\def\regionboundaryout{1.6cm}
	\begin{tikzpicture}
	\foreach \x in {1,2,...,\n}
	{
		\draw[thick] (360-\x*\segment+90:\regionboundaryin)
		--(360-\x*\segment+90:\regionboundaryout);
		\pgfmatharray{\sp}{\x}\let\tmp\pgfmathresult 
		\pgfmathparse{int(\tmp)}\let\tmp\pgfmathresult 
		\node at (360-\x*\segment+90+\shift:\labelrad) {\tmp};
		\pgfmathparse{360-(\x-1)*\segment+90}\let\alpha\pgfmathresult;
		\pgfmathparse{360-(\x-1)*\segment+90-\segment}\let\beta\pgfmathresult;
		\pgfmathgreater{\tmp}{0}\let\decision\pgfmathresult
		\ifnum \decision=1
		\draw[color=black,very thick] (\alpha:\radius) arc (\alpha:\beta:\radius);
		\else
		\draw[color=gray,very thick] (\beta:\radius) arc (\beta:\alpha:\radius);
		\fi
	};
	\end{tikzpicture}
}
\begin{document}

\title[Circular genome rearrangement models]{Circular genome rearrangement models: applying representation theory to evolutionary distance calculations}

\author{Venta Terauds and Jeremy Sumner}

\address{Discipline of Mathematics, School of Physical Sciences, University of Tasmania, GPO Box 252C-37, Hobart, Tasmania 7001, Australia}

\email{venta.terauds@utas.edu.au\\ jeremy.sumner@utas.edu.au  }

\thanks{This work was supported by Australian Research Council Discovery Early Career Research Award DE130100423 to JS and by use of the Nectar Research Cloud, a collaborative Australian research platform supported by the National Collaborative Research Infrastructure Strategy. 
We would like to thank Andrew Francis for helpful discussions and for providing the inspiration to follow this line of research.}

\begin{abstract}
We investigate the symmetry of circular genome rearrangement models, discuss the implementation of a new representation-theoretic method of calculating evolutionary distances between circular genomes, and give the results of some initial calculations for genomes with up to 11 regions.
\end{abstract}


\maketitle

\section{Introduction}

Phylogenetic modelling attempts to recover the evolutionary relationships between present-day biological organisms.
Typical input to phylogenetic methods is genomic data such as DNA or amino acid sequences.
There are many techniques available, but the classical approach is to model sequence evolution (be it DNA, amino acids, or other) as a continous-time Markov chain on a finite state space and then use likelihood (or a Bayesian approach) to estimate model parameters and the most likely evolutionary history.

Bacterial evolution is a somewhat special case in that it is usually much more dynamic than eukaryote evolution, with genome-scale events ubiquitous.
From a statistical modelling point of view, it is sensible to model bacterial evolution stochastically, using what are known as rearrangement models, and then apply a distance method. Models of genome rearrangement compare genomes with identifiably similar {\em content}, such as genes or other large scale genomic units, and focus on differences in {\em structure}, such as the order that these units appear in the genome.  

Most commonly, bacterial genomes are circular and possess a terminus (origin of replication) and antipode.
In this work, we do not consider the boundary effects of the terminus and antipode, rather assuming circular symmetry. Phylogenetic distance methods proceed by finding the evolutionary distance between pairs of the taxonomic unit of interest and then, as a graph theoretical problem, constructing evolutionary trees from these distances. For further background on bacterial genome rearrangement, rearrangement models, and distance methods, we refer the reader to \cite{darling08}, \cite{andrew14}, and \cite{phylogeny} respectively.  

Previous work on modelling bacterial evolution via rearrangement models has focused on the problem of calculating \emph{minimal distances} (see \cite{rearrangebook} for a comprehensive survey and \cite{attilaand,andrew14} for more recent work).
These are obtained by fixing an allowed set of rearrangements and asking what is the minimum number of events, chosen from this set, that is required to transform a given genome into another.
Given the combinatorial nature and consequent factorial complexity of rearrangement models, the efficient means to do this was previously only available for unrealistic choices of rearrangement (such as all inversions possible and equally likely \cite{hannenhalli1999transforming,kececioglu1995}).

In any case, the importance of modelling bacterial evolution as a stochastic process unfolding in time was recently discussed in \cite{mles}.
The overall point of that work was to show that using minimal distance as a proxy for evolutionary distance can lead to incorrect inference of evolutionary relationships and to present the first examples of maximum likelihood distances for bacteria computed using biologically reasonable models of rearrangements. 
Nonetheless, the factorial complexity of the problem remains an issue.

Concurrent work, described in \cite{jezandpet}, applied the representation theory of the symmetric group to convert this combinatorial problem into a numerical one. In particular, it was shown that the likelihood function, calculated via an infinite sum over all possible numbers of events \cite{mles}, may be written analytically as a finite sum when a Poisson distribution of events in time is assumed.
A computational advantage was further obtained in \cite{jezandpet} by decomposing into the irreducible representations of the symmetric group and thus reducing redundancy in the calculation.
Despite these improvements, the factorial complexity of the problem remains and the broader goal of that work was to initiate the introduction of numerical approximations to the problem.

The present work is intended to take the first basic steps into that domain.
Although we do not pursue the possibility of numerical approximations to a significant level, we systematically  sort through the principal practical issues that arise when attempting a concrete implementation of the techniques of \cite{jezandpet} in commonly available software (our computations were performed in SageMath \cite{sage}).

To begin, in Section~\ref{sec technique} we set out the key components of rearrangement models and the maximum likelihood approach to calculating evolutionary distance.
As foreshadowed in \cite{jezandpet}, we show that the technique easily adapts to cover models in which different rearrangements occur with different probabilities. 

Section \ref{sec symmetry} focuses on some theoretical aspects of rearrangement models in general. Symmetries arising from the circular structure of the genome itself have been discussed in detail previously (for example in \cite{andrew14, jezandpet}). However, the biological model (which we define to be the set of chosen rearrangement moves along with their probability of occurence) may also exhibit symmetry. Given our goal of estimating distance via maximum likelihood, we consider two genomes equivalent under a given model if they produce the same likelihood functions of elapsed time. We find three types of equivalences for circular genomes that are associated with different model symmetries, describe the equivalence classes, and connect each to existing combinatorial constructions and sequences in the Online Encyclopaedia of Integer Sequences (OEIS) \cite{oeis}.

Yet another aspect of symmetry is considered in Section~\ref{sec dihedral}. Here we use the dihedral symmetry of the genomes and Frobenius' character formula to identify specific irreducible representations that, irrespective of the model and the number of regions, make a contribution of zero to the likelihoood function and  thus may be ignored.

In Section~\ref{sec implement} we move to the application of the ideas thus far presented. 
Fixing two models of genome rearrangement, we give explicit, exact expressions for the likelihood functions for genomes with 5 and 6 regions. 
Similar expressions for genomes with 3 and 4 regions, under one of the models we consider, were obtained in \cite{mles} and \cite{jezandpet} respectively.

For genomes with $N\!>\!6$ regions, exact expressions are not attainable, and we explore some of the issues that arise in numerically calculating the likelihood for $N\!>\!6$ regions. 
We find, for example, that the method of calculating projection operators proposed in \cite{jezandpet} works well for genomes with 7 regions or less, but becomes numerically unstable for genomes with more regions. 
We overcome this difficulty by using an alternative definition of projection operators --- via eigenvectors --- that allows direct computation of the required ``partial traces''.

This allows us to calculate likelihood functions (numerically) and thus compute distance estimates, under each of our models, for genomes with up to 11 regions. We include a selection of results (plots of likelihood functions and maximum likelihood estimates); comprehensive results are provided in the supplementary material.

In comparison with minimal distances, our results show that the maximum likelihood approach provides a more refined estimate of time elapsed.
While minimal distances take integer values on a highly restricted range, rendering many genomes indistinguishable, our calculated maximum likelihood estimates of time elapsed are all \emph{distinct} (up to expected model symmetries).  
We also see that the maximum likelihood approach provides additional information regarding the \emph{uncertainty} of the resulting estimate of elapsed time (via the curvature of the likelihood curve around the optimum).

We conclude the paper by putting our results in a broader context and highlighting some of the many avenues that we aim to explore in forthcoming work. 

%

\section{Rearrangement models}\label{sec technique}

Full details of the technique that we apply here may be found in the recent paper \cite{jezandpet}. 
As is usual, a circular genome with $N$ regions of interest is represented as a permutation $\sigma$ on the set $\{1,2,\ldots ,N\}$; that is, an element of the symmetric group $\SN$. 
Here, {\em region} refers to an identifiable segment of the genome that is common across multiple genomes (for example, but not exclusively, a gene). 
To be mathematically precise, $\sigma$ is a mapping from the $N$ regions of interest to their $N$ possible positions on the genome: $\sigma(i) \!=\! j$ means that region $i$ is in position $j$, where we have chosen an initial reference frame --- starting position and direction --- and numbered positions consecutively around the genome. More detail on this ``position paradigm''  (and others) may be found in \cite{crazyworld}. 

Note that we predominantly use {\em cycle notation} for elements of $\SN$. For example, if we write $\sigma\!=\!(i,j,k)$, then $\sigma(i)\!=\!j, \sigma(j)\!=\!k$ and $\sigma(k)\!=\!i$.

\begin{figure}
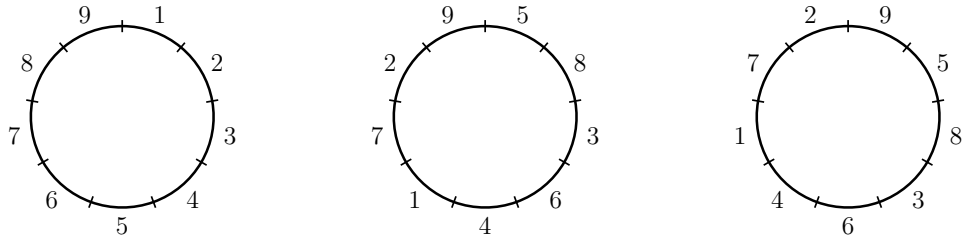

	\begin{center}
		\resizebox{4cm}{!}{
			\def\sp{{9,1,2,3,4,5,6,7,8,9}}\unsignedpermutation
		}
		\hspace{.5cm}
		\resizebox{4cm}{!}{
			\def\sp{{9,5,8,3,6,4,1,7,2,9}}\unsignedpermutation
		}
		\hspace{.5cm}
		\resizebox{4cm}{!}{
			\def\sp{{9,9,5,8,3,6,4,1,7,2}}\unsignedpermutation
		}
\caption{The diagram on the left shows a reference frame for circular genomes with nine regions. With respect to this reference frame, the left diagram represents the identity permutation $e$ and the other two the permutations $(1,6,4,5)(2,8)$ and $(1,7,8,3,4,6,5,2,9)\in\mathcal{S}_9$. The right two diagrams clearly represent the same genome.}		
	\end{center}
\end{figure}

We model evolution of genomes through rearrangement. A rearrangement of the regions of a genome $\sigma\in\SN$ is a permutation, $a\in\SN$, that acts on $\sigma$ on the left: 
\[ \sigma\mapsto a\sigma\,.\]
As $\sigma$ is a map from regions to positions, the rearrangement $a$ maps positions to positions, so it's a permutation in the usual sense. 
Then $a\sigma(i) \!=\! j$ means that region $i$ was in position $\sigma(i)$, which was then ``rearranged'', via $a$, to position $j$. 

Given that our genomes are circular, with no distinguished regions or positions, they possess a dihedral symmetry. This means that the action of certain rearrangements, precisely those belonging to the dihedral group $D_N\subseteq\SN$, corresponds to the physical action of rotating the genome or flipping it over. These actions give rise to distinct permutations with respect to the original reference frame, but the permutations all represent the same genome. Accordingly, we make the following equivalence of permutations: for $\sigma\in\SN$,
\[ \sigma \equiv d\,\sigma \; \textrm{ for any } d \in D_N \,.\]
The set of symmetries of a genome $\sigma\in\SN$ is then the equivalence class
\[ [\sigma] := \{d\,\sigma \, :\, d \in D_N \} \,.\]
Note that we do not identify a genome with its equivalence class. That is, we think of genomes as permutations $\sigma\in\SN$ and not as cosets $D_N \sigma$ (as is done in \cite{mles}). 
Depending on our choice of reference frame and on the physical orientation in which we observe it, a genome may be represented by any element of the equivalence class $[\sigma]$. Conversely, any two elements of the class $[\sigma]$ represent the same genome. 

A biological model for evolution is chosen by firstly specifying a set of allowed rearrangements, $\M\subseteq \SN\take D_N$. Rearrangements, which we shall also refer to as rearrangement events, are most commonly inversions \cite{darling08}, in which a section (contiguous set of regions) of the genome is taken out, flipped over, and put back in.\footnote{Note that we do not include gene orientation here (this would put our genomes and rearrangements into the hyperoctahedral group), but this will be addressed in future work.} In this work, we assume that all genomes may be obtained from the reference genome via a sequence of rearrangements chosen from $\M$, that is, that the set $\M$ generates the group $\SN$. Note that this is not necessary for any of our results; actually, if $\M$ did not generate the whole of $\SN$, then this would actually simplify the situation --- that is, we would be working with a smaller group and thus a reduced combinatorial complexity. 

The next component of the biological model is a mapping $w:\M\to (0,1]$ that assigns a probability to each rearrangement event. Formally, for each $a\in\M$, 
\[  w(a) = \Prob(\textrm{rearrangement event } a)\,,\]
and $\sum_{a\in\M} w(a) \!=\! 1$. The rearrangement events are assumed to be independent and to occur randomly in time according to a given distribution, $\dist$. In practice, we shall take $\dist$ to be the Poisson distribution.
We refer to the triplet $(\M, w, \dist)$ as the {\em full biological model} and, as our preliminary calculations and results are independent of the distribution of events in time, we shall often use the tuple $(\M,w)$ and refer to it as the {\em biological model}.

We calculate the distance from a reference genome $\sigma_0$ to a genome $\sigma$ as the maximum likelihood estimate of the time taken for $\sigma_0$ to evolve to $\sigma$ under the full biological model (c.f. \cite{mles}). 
To this end, we consider (for each $k\in\N$) all possible ways of applying $k$ rearrangements from $\M$ to the permutation $\sigma_0$ to obtain the permutation $\sigma$, that is, the set of {\em paths}
\[ \mypath(\sigma_0\to\sigma) := \{ (a_1,a_2, \ldots,a_k)\in\M^k\,:\, a_k \ldots a_2 a_1 \sigma_0 = \sigma \}\,, \]
and thus calculate the {\em path probability}
\[ \begin{split}
		\beta_k(\sigma_0 \to \sigma) &:= \Prob(\sigma_0 \to \sigma \textrm{ via } k \textrm{ rearrangement events}) \\
									 &= \sum_{(a_1,a_2, \ldots,a_k)\in\mypath(\sigma_0\to\sigma)}\Prob((a_1,a_2, \ldots,a_k))\\
									 &= \sum_{(a_1,a_2, \ldots,a_k)\in\mypath(\sigma_0\to\sigma)} w(a_1)w(a_2) \ldots w(a_k) \,.	
\end{split} \]
As $\beta_k(\sigma_0 \!\to\! \sigma) \!=\! \beta_k(e \!\to\! \sigma \sigma_0^{-1})$ for any $\sigma,\sigma_0\in\SN$ (where $e$ is the identity permutation), we can, without loss of generality, take our reference genome to be the identity. We can then simplify our notation and write $\beta_k(\sigma) := \beta_k(e \to \sigma)$; $\mypath(\sigma):= \mypath(e\to\sigma)$.

Note that while permutations $\sigma$ and $d\sigma$, for $\sigma\in\SN$, $d\in D_N$, correspond to the same genome, $\beta_k(\sigma) \neq \beta_k(d\sigma)$ in general. A path probability $\beta_k(\sigma)$ gives the probability of a path of length $k$ between permutations, not genomes. 
For path probabilities between genomes, we need to consider all symmetries of the target genome. The reference frame is fixed by the reference genome, but we may have observed the target genome in any of its $2N$ distinct orientations with respect to this reference frame. Thus to find the path probability between genomes, we sum over the $2N$ symmetries of the target genome:
\[ \begin{split}
\alpha_k(\sigma) &:= \Prob(e\to [\sigma] \textrm{ via } k \textrm{ rearrangement events})\\
&= \sum_{d\in D_N} \beta_k(d\sigma)\,.	
\end{split} \]
Here we use `$e\to[\sigma]$' as shorthand for `$e\to(\textrm{any element of } [\sigma])$'. 

If all rearrangement events in $\M$ are equally probable, that is, $w(a) = \frac{1}{|\M|}$ for all $a\in\M$ then, in this case, for all $k\in\N$,
\[ \beta_k(\sigma) = \frac{|\mypath(\sigma)|}{|\M|^k}\,,\]
so that $\beta_k(\sigma)$ is simply counting the number of paths $a_k \ldots a_2 a_1 = \sigma$ for $\sigma\in\SN$.
In earlier work (for example \cite{jezandpet, mles}), this assumption of equal probability over the set $\M$ has been applied and, accordingly,  $\alpha_k(\sigma)$ and $\beta_k(\sigma)$ have been termed ``path counts'' rather than path probabilities.

The likelihood of our reference genome evolving into the genome $\sigma$ in a given time $T$ under the full biological model is now easily expressed as
\begin{equation}\label{eq gen like} \begin{split}
L(T|\sigma) &= \sum_{k=0}^{\infty} \Prob(e\to [\sigma] \textrm{ via } k \textrm{ events})\,\Prob(k \textrm{ events occur in time } T)\\
			&=\sum_{k=0}^{\infty} \alpha_k(\sigma)\,\Prob(k \textrm{ events occur in time } T)\,,
\end{split}
\end{equation}
where the remaining probablility is determined by the distribution $\dist$. 
For a given genome, this likelihood is a function of $T$ which we optimise to get the maximum likelihood estimate of elapsed time. 

We now recall the application of group representation theory, as introduced in \cite{jezandpet}, to convert this problem from a combinatorial to a numerical formulation. 
For background on the relevant aspects of group representation theory, we refer the reader to \cite{sagan}.

We denote by $\s$ the formal sum of the elements of $\M$, weighted by their probabilities, in the group algebra $\C[\SN]$. 
That is, 
\[ \s:=\sum_{a\in\M} w(a) a\,.\]
As discussed in \cite{jezandpet}, a pivotal role in what follows is played by the eigenvalues of $\s$.

We use $\rho$ and $\chi$ to denote representations and characters of $\SN$ respectively, along with their respective extensions to $\C[\SN]$. We denote the regular representation of $\SN$ (as carried by $\C[\SN]$) and its character by $\rho_{\reg}$ and $\chi_{\reg}$ respectively; we index, as usual, the irreducible representations and characters by integer partitions. 
Explicitly, $\rho_p$ refers to the irreducible representation of $\SN$ corresponding to the integer partition $p$ of $N$, where 
$p=(p_1^{m_1}, p_2^{m_2}, \ldots , p_k^{m_k})$, $p_1>p_2>\ldots>p_k>0$ and $\sum_{i=1}^k m_k p_k = N$. 
The set of partitions of $N$ will be denoted by $\Pa$.

We calculate path probabilities for a given $\sigma\in\SN$ by projecting onto the eigenspaces of the irreducible representations of $\s$. 
The derivation of this in \cite{jezandpet} was based on the assumption that each rearrangement in the set $\M$ occurs with equal probability. 
However, as foreshadowed in that paper, a short argument verifies that with our more general definitions of $\s$ and the path probabilities $\beta_k(\sigma)$, one still obtains the identity \cite[Section 3]{jezandpet}
\[ \s^k = \sum_{\sigma\in\SN} \beta_k(\sigma) \sigma, \]
and, accordingly,
\[ \beta_k(\sigma) = \tfrac{1}{N!} \chi_{\reg}(\sigma^{-1}\s^k) = \tfrac{1}{N!} \tr\left(\rho_{\reg}(\sigma^{-1})\rho_{\reg}(\s)^k\right)\,,\]
where $\tr$ denotes matrix trace. Thus
\begin{equation}\label{eq alpha irreps}
\begin{split}
\alpha_k(\sigma) = \tfrac{1}{N!}\sum_{d\in D_N}\beta_k(d\sigma) &= \tfrac{1}{N!}\sum_{d\in D_N} \,\sum_{p\in\Pa} D_p \chi_p(\sigma^{-1} d\, \s^k) \,,
\end{split}
\end{equation}
where $D_p$ is the dimension of the representation $\rho_p$. Then, as long as the matrices $\rho_p(\s)$ are diagonalisable (this is true, for example, if the model consists entirely of inversions or, more generally, is time reversible --- see Section~\ref{sec implement}), one has
\begin{equation}\label{eq alpha sum 1} 
\alpha_k(\sigma) = \tfrac{1}{N!}\sum_{d\in D_N}\, \sum_{p\in\Pa} D_p \,\sum_{\lambda_{p,i}} (\lambda_{p,i})^k \tr(\rho_p(\sigma^{-1} d) E_{p,i})\,.
\end{equation}
Here $\lambda_{p,i}$ is the $i$th eigenvalue of the irreducible representation $\rho_p(\s)$, and $E_{p,i}$ is the projection onto the eigenspace of $\lambda_{p,i}$.


\section{Model symmetries and genome equivalences}\label{sec symmetry}

Just as a genome $\sigma\in\SN$ has associated symmetries, $\{d\sigma\, :\, d\in D_N\}$, the biological model $(\M,w)$ may exhibit (or may be constructed to exhibit) certain symmetries. For example, if the model consists entirely of inversions, $a \!=\! a^{-1}$, then we have $\M^{-1}\!=\! \M$ (where $\M^{-1}\!:=\! \{a^{-1} \,:\, a\in\M\}$). 
Similarly, suppose that $a=(1,2)\in\M$. 
This is the rearrangement that swaps the regions in the first and second positions on the genome. 
For genomes with no distinguished region or position, we could reasonably expect that $\M$ also contains the rotated and flipped versions of this rearrangement, $(2,3), (3,4)$ and so on, that is, all of the rearrangements that swap regions in adjacent positions.  
For $\sigma\in\SN, a\in\M, d\in D_N$, we have
\[  a\sigma \equiv d (a \sigma) = d a (d^{-1} d) \sigma = (d a d^{-1}) (d \sigma) \,, \]
so the dihedral symmetries of a rearrangement $a\in\M$ are given by the rearrangements $\{dad^{-1}\,:\, d\in D_N\}$. 

\begin{definition}
Let $(\M, w)$ be a biological model on $N$ regions. 
\begin{enumerate}
\item[(i)] $(\M,w)$ is said to have {\em dihedral symmetry} if for all $d\in D_N$, 
\[ d\M d^{-1} :=\{dad^{-1}:a\in \mathcal{M}\}= \M \] 
and $w(dad^{-1}) = w(a)$ for all $a\in\M$.
\item[(ii)] $(\M,w)$ is said to be {\em time reversible} if $\M = \M^{-1}$ and $w(a^{-1}) = w(a)$ for all $a\in\M$.
\end{enumerate}
\end{definition}

We presently show that time reversibility of a model in conjunction with dihedral symmetry implies that the likelihood functions of $\sigma$ and $\sigma^{-1}\in\SN$ are identical. In this case, our distance measure (MLE) will be symmetric; that is,
for genomes $\sigma_1$ and $\sigma_2$, the time estimated for $\sigma_1$ to evolve into $\sigma_2$ is the same as that for $\sigma_2$ to evolve into $\sigma_1$. 

As mentioned in the previous section, it is common to consider only inversions as rearrangement events; in this case the model will be time reversible. It is almost trivial to construct examples of models that are time reversible but do not have dihedral symmetry;\footnote{If we do not require that $\M$ generates $\SN$, then it is certainly trivial!} for example, set $\M = \{(1,2), (2,3), \ldots , (N-1,N)\}\in\SN$.  
The following is an example of a model that has dihedral symmetry but is not time reversible. 

\begin{example}
Let $N\!=\!5$ and consider $a\!=\!(1,2,4,3), b=(1,2)\in\SN$. 
Define
\[ \begin{split}
\M &\!:= \{d a d^{-1}, d b d^{-1}\,:\, d\in D_N\} \\
	&\,= \{(1,2,4,3),(2,3,5,4),(3,4,1,5),(4,5,2,1),(5,1,3,2), (1,2), (2,3),(3,4),(4,5), (5,1)\}\,, \end{split}
\] 
and note that the rotations of $a$ and $b$ correspond to their reflections. Clearly $d\M d^{-1} = \M$ for all $d\in D_N$, so defining $w$ to be constant on $\M$ gives a model with dihedral symmetry. 
However, $a^{-1} = (1,3,4,2)\notin\M$, so the model is not time reversible. 

This example easily generalizes to any $N\geq 5$: $a=(1,2,4,3), b=(1,2)\in\SN$, but $a^{-1}\notin \M := \{d a d^{-1}, d b d^{-1}\,:\, d\in D_N\}$. Note that if we do not require $\M$ to generate $\SN$, we do not need to include the element $b$ and its symmetries in the model set.    
\hfill $\diamond$
\end{example}

As the permutations $\sigma$ and $d\sigma\in\SN$ represent the same genome for any $d\in D_N$, it is intuitively clear that we must always have $\alpha_k(\sigma) =\alpha_k(d\sigma)$ (to see this algebraically, simply note that, as $D_N$ is a group, $[\sigma]=[d\sigma]$). 
If the set $\M$ has some further symmetry, and the probabilities $w(a)$ reflect this symmetry, then there will be many more genomes (beyond those in $[\sigma]$) that have the same path probabilities, and hence likelihood functions, as $\sigma$. 
The following result will allow us to split $\SN$ into further equivalence classes of permutations corresponding to genomes with the same likelihood functions.

\begin{proposition}\label{prop path probs equality}
Let $(\M, w)$ be a biological model on $N$ regions with dihedral symmetry. Let $\sigma\in\SN$ and $k\in\N$.
\begin{enumerate}
\item[(i)] For any $d_1, d_2\in D_N$, $\alpha_k(d_1\sigma d_2) = \alpha_k(\sigma)$.
\item[(ii)] If, further, the model is time reversible, then $\alpha_k(\sigma^{-1}) = \alpha_k(\sigma)$.
\end{enumerate}	
\end{proposition}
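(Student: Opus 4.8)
The plan is to work inside the group algebra $\C[\SN]$, using the identity $\s^k=\sum_{\sigma\in\SN}\beta_k(\sigma)\sigma$ to translate symmetries of $\s$ into symmetries of its coefficients $\beta_k(\sigma)$. The decisive first observation is that dihedral symmetry of $(\M,w)$ says exactly that $\s$ is fixed by conjugation by every $d\in D_N$: since $a\mapsto dad^{-1}$ permutes $\M$ and preserves the weights $w$, one obtains $d\s d^{-1}=\s$, and hence $d\s^k d^{-1}=\s^k$ for all $k$. Expanding both sides of this via the identity above and matching the coefficient of a fixed permutation yields the conjugation invariance $\beta_k(d\tau d^{-1})=\beta_k(\tau)$ for every $\tau\in\SN$ and $d\in D_N$. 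This single fact drives part (i), and a companion fact (invariance of $\s$ under inversion) drives part (ii).

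For part (i), I would begin from $\alpha_k(d_1\sigma d_2)=\sum_{d\in D_N}\beta_k(dd_1\sigma d_2)$ and use the conjugation invariance to move the trailing $d_2$ to the front: applying $\beta_k(\tau)=\beta_k(d_2\tau d_2^{-1})$ with $\tau=dd_1\sigma d_2$ gives $\beta_k(dd_1\sigma d_2)=\beta_k(d_2dd_1\sigma)$. Summing over $d\in D_N$ and noting that $d\mapsto d_2dd_1$ is a bijection of $D_N$ (because $D_N$ is a group), the sum reindexes to $\sum_{d'\in D_N}\beta_k(d'\sigma)=\alpha_k(\sigma)$. Thus both the left factor $d_1$ and the right factor $d_2$ are absorbed by the averaging over $D_N$, which is precisely (i).

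For part (ii), I would record the algebraic consequence of time reversibility. Let $J\colon\C[\SN]\to\C[\SN]$ be the linear extension of $\tau\mapsto\tau^{-1}$; it is an anti-automorphism of the group algebra. Time reversibility ($\M=\M^{-1}$ and $w(a^{-1})=w(a)$) says precisely that $J(\s)=\s$, and since $J$ is an anti-automorphism fixing $\s$ we get $J(\s^k)=\s^k$ for all $k$. Applying $J$ to the identity $\s^k=\sum_{\tau\in\SN}\beta_k(\tau)\tau$ and comparing coefficients gives $\beta_k(\tau)=\beta_k(\tau^{-1})$. Now compute $\alpha_k(\sigma^{-1})=\sum_{d\in D_N}\beta_k(d\sigma^{-1})=\sum_{d\in D_N}\beta_k(\sigma d^{-1})$, using $\beta_k(\mu)=\beta_k(\mu^{-1})$ with $\mu=d\sigma^{-1}$. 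Reindexing by $d\mapsto d^{-1}$ over $D_N$ and then invoking the conjugation invariance from part (i) (which replaces each $\beta_k(\sigma d')$ by $\beta_k(d'\sigma)$) collapses the sum to $\sum_{d'\in D_N}\beta_k(d'\sigma)=\alpha_k(\sigma)$.

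I expect no serious obstacle: the argument is essentially a chain of reindexings of sums over the finite group $D_N$, justified by the two structural facts that $\s$ is fixed by $D_N$-conjugation and by inversion. The only points needing care are the bookkeeping — tracking which multiplications act on the left versus the right and checking at each step that the reindexing map is genuinely a bijection of $D_N$ — together with the passage from symmetries of $\s$ to symmetries of the coefficients $\beta_k$ by coefficient comparison, which relies on the group elements forming a basis of $\C[\SN]$.
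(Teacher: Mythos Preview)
Your proposal is correct and follows essentially the same route as the paper: both arguments establish the two key invariances $\beta_k(d\tau d^{-1})=\beta_k(\tau)$ (from dihedral symmetry) and $\beta_k(\tau^{-1})=\beta_k(\tau)$ (from time reversibility), and then finish with identical reindexings of the sum $\sum_{d\in D_N}\beta_k(\cdot)$. The only cosmetic difference is that the paper derives those two invariances by an explicit bijection of paths, whereas you obtain them by coefficient comparison in $\C[\SN]$ from $d\s d^{-1}=\s$ and $J(\s)=\s$; these are the same facts packaged algebraically.
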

\begin{proof}
\begin{enumerate}[wide, labelwidth=!, labelindent=0pt]
\item[(i)] Let $d\in D_N$. If $\sigma \!=\! a_k\ldots a_2 a_1$, where each $a_i\in\M$, then we have $d\sigma d^{-1} \!=\! (da_k d^{-1})\ldots(d a_{2} d^{-1})(d a_1 d^{-1})$. By dihedral symmetry, each $d a_i d^{-1} \in\M$. 
Conversely, if $d\sigma d^{-1} \!=\! a_k\ldots a_2 a_1$, where each $a_i\in\M$, then
\[ \sigma = d^{-1}a_k\ldots a_2 a_1 d =   (d^{-1}a_k d) \ldots (d^{-1}a_2 d) (d^{-1} a_1 d) \,\]
and each $d^{-1}a_i d = d^{-1}a_i (d^{-1})^{-1} \in\M$. 
This verifies that, for each path $\varphi\in\mypath(\sigma)$, there exists a path $\varphi' \in\mypath(d\sigma d^{-1})$, and vice versa. Dihedral symmetry ensures that these paths have equal probability; for any $d\in D_N$, we have
\[ \begin{split}
\Prob((da_1 d^{-1} ,d a_{2} d^{-1},\ldots,d a_k d^{-1} ))
&= w(da_1 d^{-1}) w(d a_{2} d^{-1})\ldots w(d a_k d^{-1})\\
&= w(a_1) w(a_2)\ldots w(a_k) \\
&= \Prob((a_1 ,a_2,\ldots, a_k))\,.
\end{split} \]
Thus
\begin{equation}\label{eq beta conj}
	\beta_k(d\sigma d^{-1}) =\beta_k(\sigma) \;\; \textrm{ for all } \; d\in D_N \,. 
\end{equation}
Now, let $d_1, d_2\in D_N$. We have
\[
\alpha_k(d_1\sigma d_2)=\sum_{d\in D_N} \beta_k(d (d_1\sigma d_2)) =\sum_{d\in D_N} \beta_k(d_2 (d d_1\sigma d_2)d_2^{-1})=\sum_{d\in D_N} \beta_k(d \sigma )= \alpha_k(\sigma),
\]
where, along with (\ref{eq beta conj}), we have used that $D_N$ is a group, so $\{d_2 d d_1 : d\in D_N\} \!=\! D_N$.
\smallskip
\item[(ii)] 
If the model is time reversible, then $\sigma \!=\! a_k\ldots a_2 a_1$, with each $a_i\in\M$, if and only if $\sigma^{-1}= a_1^{-1} a_2^{-1} \ldots a_k^{-1}$, with each $a_i^{-1}\in\M$. As $w(a_i) = w(a_i^{-1})$ for all $i$, it follows that
\begin{equation}\label{eq beta inv}
	\beta_k(\sigma)=\beta_k(\sigma^{-1})\,.
\end{equation}
Then
\[
\alpha_k(\sigma^{-1}) =\sum_{d\in D_N} \beta_k(d \sigma^{-1})=\sum_{d\in D_N} \beta_k(\sigma d^{-1})=\sum_{d\in D_N} \beta_k(\sigma d)=\sum_{d\in D_N} \beta_k(d\sigma)=\alpha_k(\sigma),
\]
where we have used (\ref{eq beta inv}), $D_N^{-1} = D_N$, and (\ref{eq beta conj}). 
\end{enumerate}
\end{proof} 

The following example shows that dihedral symmetry, along with time reversibility, is strictly necessary to obtain $\alpha_k(\sigma) = \alpha_k(\sigma^{-1})$.

\begin{example}
Let $N=6$, $\M = \{(1,2),(2,3),(3,4),(4,5),(5,6)\}$ and $w(a) = \tfrac{1}{5}$ for all $a\in\M$. This model is time reversible without dihedral symmetry (since $(6,1)\notin \mathcal{M}$). Consider $\sigma=(1,2,4,3)$ and $\sigma^{-1}= (1,3,4,2)\in\SN$. Using SageMath and (\ref{eq alpha sum 1}) to calculate the path probabilities, we found that $\alpha_4(\sigma) = \tfrac{11}{5^4}$ and $\alpha_4(\sigma^{-1}) = \tfrac{8}{5^4}$.
\hfill $\diamond$
\end{example}

Examining the sum (\ref{eq gen like}), we observe that the likelihood functions of $\sigma_1$ and $\sigma_2\in\SN$ coincide if and only if all path probabilities coincide, that is, if and only if $\alpha_k(\sigma_1)=\alpha_k(\sigma_2)$ for all $k$.

\begin{definition}
	Let $(\M, w)$ be a biological model on $N$ regions. Two permutations $\sigma_1, \sigma_2\in\SN$ are said to be {\em $(\M, w)$-equivalent} if their likelihood functions calculated under the model are equal.
\end{definition}

Now, in view of the observation above, we apply Proposition~\ref{prop path probs equality} and the preceding discussion to gain the following.

\begin{corollary}\label{cor equiv classes}
	Let $(\M, w)$ be a biological model on $N$ regions. The $(\M, w)$-equivalence classes of $\SN$ are as follows.
	\begin{enumerate}
		\item[(i)] For every choice of model, 
		\[ [\sigma] = \{d\sigma\,:\, d\in D_N\}\,,\;\; \sigma\in\SN\,.\]
		\item[(ii)] For models with dihedral symmetry,
		\[ [\sigma]_D := \{d_1\sigma d_2\,:\, d_1,d_2\in D_N\}\,,\;\; \sigma\in\SN\,.\]
		\item[(iii)] For models with dihedral symmetry that are time reversible,
		\[ [\sigma]_{DR}:= \{d_1\sigma d_2\,:\, d_1,d_2\in D_N\}\cup\{d_1\sigma^{-1} d_2\,:\, d_1,d_2\in D_N\}\,,\;\; \sigma\in\SN\,.\] $\hfill\square$
	\end{enumerate}
\end{corollary}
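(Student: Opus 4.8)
The plan is to combine the reduction recorded immediately before the corollary with the identities proved in Proposition~\ref{prop path probs equality}, and to supply the (purely group-theoretic) fact that each listed family partitions $\SN$. Since $\sigma_1$ and $\sigma_2$ are $(\M,w)$-equivalent precisely when $\alpha_k(\sigma_1)=\alpha_k(\sigma_2)$ for every $k\in\N$, it suffices throughout to reason about the path probabilities $\alpha_k$, and the task splits into two parts for each case: the listed family is a partition of $\SN$, and $\alpha_k$ is constant on each block.

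First I would settle the partition structure. For (i), the sets $[\sigma]=D_N\sigma$ are the orbits of the left-multiplication action of $D_N$ on $\SN$, so they partition $\SN$ automatically. For (ii), the sets $[\sigma]_D=D_N\sigma D_N$ are the double cosets of $D_N$, which again partition $\SN$. For (iii), I would observe that $[\sigma]_{DR}$ is the orbit of $\sigma$ under the group generated by left multiplication by $D_N$, right multiplication by $D_N$, and inversion $\iota:\tau\mapsto\tau^{-1}$. Since $\iota$ conjugates left multiplication by $d$ into right multiplication by $d^{-1}$ (that is, $\iota\circ L_d\circ\iota = R_{d^{-1}}$), this group is $(D_N\times D_N)\rtimes\langle\iota\rangle$, and its orbit through $\sigma$ is exactly $D_N\sigma D_N\cup D_N\sigma^{-1}D_N$; being orbits of a group action, the $[\sigma]_{DR}$ partition $\SN$.

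Next I would verify that all permutations in a single block share a likelihood function. For (i), for any $d\in D_N$ the reindexing $\{d'd:d'\in D_N\}=D_N$ gives $\alpha_k(d\sigma)=\sum_{d'\in D_N}\beta_k(d'd\sigma)=\sum_{d'\in D_N}\beta_k(d'\sigma)=\alpha_k(\sigma)$, valid for every model; hence $[\sigma]$ lies in a single $(\M,w)$-class. For (ii), Proposition~\ref{prop path probs equality}(i) gives $\alpha_k(d_1\sigma d_2)=\alpha_k(\sigma)$ for all $d_1,d_2\in D_N$ once the model has dihedral symmetry, so $[\sigma]_D$ lies in a single class. For (iii), combining part (i) with part (ii) of the proposition (the latter supplying $\alpha_k(\sigma^{-1})=\alpha_k(\sigma)$ for dihedrally symmetric, time-reversible models) shows $\alpha_k$ is constant on $D_N\sigma D_N\cup D_N\sigma^{-1}D_N$, so $[\sigma]_{DR}$ lies in a single class. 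Together with the partition statements, this exhibits the listed families as the symmetry-forced equivalence classes.

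The delicate point, and the step I expect to be the genuine obstacle, is the reverse inclusion: that these blocks are not merely contained in the $(\M,w)$-equivalence classes but equal to them, i.e.\ that no two permutations from distinct blocks accidentally produce the same likelihood function. Proposition~\ref{prop path probs equality} yields only one direction (symmetry $\Rightarrow$ equal $\alpha_k$), so the orbit partition is a priori only a refinement of the likelihood partition, and for an arbitrary model accidental coincidences of the whole sequence $(\alpha_k)_{k\in\N}$ cannot be excluded by an abstract argument. I would therefore read the corollary as identifying the equivalence classes \emph{forced by} the respective model symmetries, which is exactly what the later enumeration of classes requires; and where literal maximality is wanted, I would confirm it for the specific models at hand by direct computation of the $\alpha_k$, in the spirit of the Example preceding the Definition, where distinct blocks are separated already by the single value $\alpha_4$.
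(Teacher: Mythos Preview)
Your approach matches the paper's: the corollary is stated with a terminal $\square$ and the only justification offered is the preceding sentence, ``we apply Proposition~\ref{prop path probs equality} and the preceding discussion,'' together with the observation that likelihood-equality is equivalent to $\alpha_k$-equality for all $k$. Your expansion of this into (a) the orbit/partition structure and (b) constancy of $\alpha_k$ on blocks via Proposition~\ref{prop path probs equality} is exactly what the paper leaves implicit.

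Your diagnosis of the reverse inclusion is correct and worth flagging: the paper does not prove it either. Proposition~\ref{prop path probs equality} only shows that each listed set lies inside a single $(\M,w)$-equivalence class, so the partitions $\{[\sigma]\}$, $\{[\sigma]_D\}$, $\{[\sigma]_{DR}\}$ are a priori refinements of the true likelihood partition. The paper's subsequent use of the corollary (counting classes in Section~3.1, reducing the number of likelihood computations in Section~5) is consistent with your reading of the statement as ``the equivalence classes forced by the respective symmetries,'' and the later numerical observation that the MLEs for distinct $[\sigma]_{DR}$-classes are all distinct confirms maximality \emph{a posteriori} for the specific models computed. So your interpretation of the intended content, and your plan for how to handle the gap when literal maximality is needed, both align with what the paper actually does.
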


Note that the equivalence classes are independent of the chosen distribution of events in time. Further, as they may equivalently be defined in terms of path probabilities, they are valid for other measures of genomic distance that depend on path probabilities (or path counts), such as mimimum distance.

If the biological model exhibits no symmetry, then each equivalence class will simply correspond to a single genome (case (i) above). In cases (ii) and (iii) above (for $N>3$), an equivalence class will, in general, contain permutations corresponding to different genomes. Note that although these genomes will have the same MLE, that is, be an identical distance from the reference genome, their distance to one another as calculated under the model will in general not be zero. (Algebraically, we are simply noting that, in general, for $\sigma_1, \sigma_2\in [\sigma]_{x}$, $\sigma_2\sigma_1^{-1}\not\in[e]_x$, where $x=D$ or $DR$.)

For models with dihedral symmetry, Corollary~\ref{cor equiv classes} (ii) generalises \cite[Proposition 4]{mles}, which proves (\ref{eq beta conj}) for any element $d$ in the \emph{normaliser} of $\M$, under the assumption of constant probability on $\M$ (i.e. $w(a)=\text{const.}$).\footnote{The result in \cite{mles} is stated in terms of likelihoods, however the likelihoods are for single elements of $\SN$, with dihedral symmetry not included in calculations until later in the paper.}
Recall that the normaliser of a set $\M\in\SN$ is the group
\[\No_{\SN}(\M):=  \{g\in\SN\,:\, g\M g^{-1} = \M\}\,.\]
For a model $(\M,w)$ with dihedral symmetry, $D_N\subseteq\No_{\SN}(\M)$. 
In fact, our proof of (\ref{eq beta conj}) easily extends to all elements in the normaliser of $\M$, provided that for $a\in\M$, $g\in \No_{\SN}(\M)$, one has $w(g a g^{-1}) = w(a)$.
It may be the case that the normaliser of $\M$ is strictly larger than $D_N$. However, in practice we are not so interested in other (non-dihedral) elements in the normaliser of $\M$, as result (i) of Proposition~\ref{prop path probs equality} is not generalisable to the whole normaliser.  

We further note that a version of \cite[Proposition 4]{mles} is included in \cite[Section 3]{jezandpet}. The statements in these two papers are connected by the fact that, in the case of constant probability on $\M$, the stabiliser of the group algebra element $\s$ is exactly the normaliser of the set $\M$.
That is, 
\[ \{g\in\SN\,:\, g \s g^{-1} = \s\} = \{ g\in\SN\,:\, g\M g^{-1} = \M\} \,.\]

\subsection{Counting the equivalence classes}

As observed above, if we do not assume any symmetry in our model, then the number of equivalence classes of permutations under the model is simply the number of distinct genomes. 

\begin{proposition}\label{prop equiv genomes}
For any biological model on $N\geq 3$ regions, the number of genome equivalence classes $[\sigma]\subseteq\SN$ is $\tfrac{N!}{2N}=\tfrac{(N-1)!}{2}$. 
$\hfill\square$
\end{proposition}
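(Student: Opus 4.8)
The plan is to recognise the equivalence classes of part (i) of Corollary~\ref{cor equiv classes} as the right cosets of the dihedral subgroup $D_N$ in $\SN$ and then apply Lagrange's theorem. Indeed, each class is $[\sigma]=\{d\sigma : d\in D_N\}=D_N\sigma$, so the classes are precisely the distinct right cosets of $D_N$ in $\SN$, and these partition the group.

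First I would verify that every class has the same cardinality, namely $|D_N|$. For a fixed $\sigma\in\SN$, the map $D_N\to D_N\sigma$, $d\mapsto d\sigma$, is surjective by definition and injective by cancellation in the group $\SN$ (if $d_1\sigma=d_2\sigma$ then $d_1=d_2$). Hence $|[\sigma]|=|D_N|$ for every $\sigma$, and so the number of classes is the index $|\SN|/|D_N|=N!/|D_N|$.

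It then remains only to compute $|D_N|$. For $N\geq 3$ the dihedral group embeds faithfully in $\SN$ as the symmetry group of the regular $N$-gon: it is generated by the rotation $r=(1,2,\ldots,N)$, an element of order $N$, together with a reflection of order $2$, and the $N$ resulting rotations and $N$ reflections are pairwise distinct as permutations, so $|D_N|=2N$. Substituting gives $N!/(2N)=(N-1)!/2$, as claimed.

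The calculation is elementary and I expect no real obstacle beyond confirming that the $2N$ dihedral permutations are genuinely distinct elements of $\SN$. The only place the hypothesis $N\geq 3$ is used is precisely the identity $|D_N|=2N$: for $N=1,2$ the dihedral group degenerates and this count fails, which is why the statement is restricted to $N\geq 3$.
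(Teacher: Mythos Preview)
Your argument is correct and is exactly the reasoning the paper takes for granted: the proposition is stated there without proof (the $\square$ immediately follows the statement), precisely because the classes $[\sigma]=D_N\sigma$ are the right cosets of $D_N$ in $\SN$ and Lagrange's theorem gives the count $N!/|D_N|=N!/(2N)$. Your remark on where $N\geq 3$ is needed is also accurate.
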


The sequence $\left(\tfrac{(N-1)!}{2}\right)$ also gives the order of the alternating group $A_{N-1}$. In the Online Encyclopaedia of Integer Sequences (OEIS) \cite{oeis}, this is sequence A001710.\footnote{We note that the OEIS entry includes a characterisation of this sequence that is equivalent to our definition of genomes (namely, the number of necklaces that may be formed from $N$ distinct beads).}
Finding the number of equivalence classes when we have some model symmetry is somewhat less trivial.

For a model with dihedral symmetry, using SageMath and Corollary~\ref{cor equiv classes} we found that the number of equivalence classes $[\sigma]_D \subseteq\SN$ for $N=3,4,5,\ldots 10$ is
\[ 1, 2, 4, 12, 39, 202, 1219, 9468, \ldots\,.\]
This coincides with the beginning of sequence A000940 from the OEIS \cite{oeis}, which in \cite{polygons} is described as the number, $S(N)$, of classes of similar polygons on $N$ vertices. The paper further gives exact expressions for $S(N)$ for odd and even values of $N$. We now proceed to show that the number of equivalence classes $[\sigma]_D \subseteq\SN$ is indeed given by $S(N)$.
The relevant definitions from \cite{polygons} are as follows. 

\begin{definition}
Given $N$ equally spaced points on a circle, one forms a {\em polygonal path} by choosing a labelling of the points with $1,2,\ldots,N$ and forming the (directed) edges $N\rightarrow 1$ and $i \rightarrow i+1$ for $i=1,2,\ldots , N-1$. 
Ignoring the labels of the points and the direction of the edges, one has a {\em polygon}. Thus, two polygonal paths that differ only in starting point or orientation of numbering are said to define {\em identical polygons} and two polygons that differ only by a plane rotation or a reflection through an axis are termed {\em similar polygons}.
\end{definition}
Note that identical polygons coincide with an unlabelled shape with fixed orientation, whereas similar polygons coincide with an unlabelled shape without fixed orientation.
The key observation in the following is that the dihedral group $D_N$ acts on polygonal paths by either permutating the labels (acting on the right) or by physical rotation and/or reflection (acting on the left).

\begin{proposition}\label{prop equiv D}
For a model with dihedral symmetry, the number of equivalence classes $[\sigma]_D \subseteq\SN$ is exactly the number, $S(N)$, of classes of similar polygons on $N$ vertices.
\end{proposition}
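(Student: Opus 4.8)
The plan is to realise the two-sided equivalence $\sigma \sim d_1 \sigma d_2$ defining $[\sigma]_D$ as a natural action of $D_N\times D_N$ on polygonal paths, and then read off that the orbits of this action are precisely the similar-polygon classes. Throughout I work in the position paradigm of Section~\ref{sec technique}: a polygonal path, being a labelling of the $N$ fixed, equally spaced points by $1,\ldots,N$, is exactly a bijection from labels to points, hence an element $\sigma\in\SN$ with $\sigma(k)$ the point carrying label $k$. Under this identification the edges $k\to k+1$ (indices mod $N$) join the points $\sigma(k)$ and $\sigma(k+1)$, so the underlying \emph{polygon} is the undirected Hamiltonian cycle $E(\sigma):=\{\{\sigma(k),\sigma(k+1)\}\,:\,k=1,\ldots,N\}$ on the point set.

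First I would pin down the two group actions. Changing the starting point or the orientation of the numbering replaces the labelling by a dihedral relabelling; since relabelling permutes the \emph{labels}, it acts on $\sigma$ on the right, so the set of paths defining the same polygon as $\sigma$ is $\{\sigma\tau:\tau\in D_N\}$. To see that this is exactly the set of paths sharing the underlying figure, I would use that an undirected Hamiltonian cycle is determined by, and determines, its cyclic vertex sequence up to rotation and reflection: $E(\sigma_1)=E(\sigma_2)$ holds if and only if the cyclic sequences $(\sigma_1(k))_k$ and $(\sigma_2(k))_k$ agree up to cyclic rotation and reversal, i.e.\ if and only if $\sigma_2=\sigma_1\tau$ for some $\tau\in D_N$ (here I use $N\geq 3$, so that $D_N$ is the full automorphism group of the $N$-cycle). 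Thus the identical-polygon classes are exactly the orbits $\sigma D_N$ under right multiplication. Next, a plane rotation or a reflection through an axis is a physical symmetry of the fixed points, so it moves the point carrying label $k$ from $\sigma(k)$ to $d\,\sigma(k)$ for the corresponding $d\in D_N$; this acts on $\sigma$ on the left, giving $d\sigma$. Hence two polygons are similar precisely when their representatives differ by a left factor from $D_N$.

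Combining the two actions, the class of polygons similar to the one carried by $\sigma$ corresponds exactly to the orbit $\{d_1\sigma d_2:d_1,d_2\in D_N\}=[\sigma]_D$ of Corollary~\ref{cor equiv classes}(ii), i.e.\ to the double coset $D_N\sigma D_N$. Since these double cosets partition $\SN$ and the identification of polygonal paths with permutations is a bijection, the number of similar-polygon classes equals the number of classes $[\sigma]_D$, namely $S(N)$. I expect the main obstacle to be purely in the bookkeeping of this dictionary --- keeping straight that relabelling (starting point and orientation) is the \emph{right} action while physical rotation and reflection is the \emph{left} action, consistently with the position-paradigm convention --- together with the one genuine lemma that the edge set of a Hamiltonian cycle recovers the cyclic order up to rotation and reversal, which is what forces the relabelling stabiliser to be exactly $D_N$ rather than something larger.
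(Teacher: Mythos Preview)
Your argument is correct and follows essentially the same route as the paper: identify polygonal paths with permutations via the position paradigm, observe that relabelling (change of starting point and orientation) acts on the right by $D_N$ while physical rotation/reflection acts on the left, and conclude that similar-polygon classes are exactly the double cosets $D_N\sigma D_N=[\sigma]_D$. You are a little more explicit than the paper in justifying that the right $D_N$-orbit captures \emph{exactly} the identical polygons (via the Hamiltonian-cycle lemma and the $N\geq 3$ hypothesis), which is a welcome clarification rather than a different approach.
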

\begin{proof}
A permutation $\sigma\in\SN$ may be represented as $N$ equally spaced, labelled points on a circle by choosing a reference frame and placing region label $i$ on the point $\sigma(i)$. 
Forming, as above, an edge $i\to i+1$ for $i=1,2,\ldots N-1$ and an edge $N\to 1$ defines a polygonal path and thus a polygon $q_{\sigma}$.
The set of polygons  identical to $q_{\sigma}$ is defined in exactly this way by the permutations $\{\sigma d \,:\, d\in D_N\}$. The set of plane rotations and reflections of these is given by the permutations $\{d_1 \sigma d\,:\, d_1, d\in D_N\}$.

Conversely, given a polygon $q$ on $N$ equally spaced vertices on a circle, choose a starting point and orientation of numbering and label the vertices $1,\ldots , N$. This defines a permutation $\sigma_q \in \SN$ where $\sigma_q(i)\! =\! j$ whenever vertex label $i$ is in position $j$. 
Reasoning as above, we see that the set of polygons similar to $q$ defines the equivalence class of permutations $[\sigma_q]_D$.
\end{proof}


For a model with dihedral symmetry and time reversibility, using SageMath and Corollary~\ref{cor equiv classes} we found that the number of equivalence classes $[\sigma]_{DR} \subseteq\SN$ for $N=3,4,5,\ldots , 10$ is
\[ 1, 2, 4, 10, 28, 127, 686, 4975,\ldots\,.\]
This coincides with sequence A006841 from the OEIS \cite{oeis}, the number, $T(N)$, of inequivalent permutation arrays of size $N$. An expression for $T(N)$ is derived in \cite{arrays}. We now proceed to show that the number of equivalence classes $[\sigma]_{DR} \subseteq\SN$ is indeed given by $T(N)$. We begin with the relevant definitions from \cite{arrays}.

\begin{definition}
A {\em permutation array of period} $N$ is an $N\times N$ matrix with a single `$1$' in each row and column and `$0$'s elsewhere, that is, a permutation matrix. Two permutation matrices are said to be {\em equivalent} if one can be obtained from the other by a cyclic shift of rows or columns, by rotation, by transposition, or by any  sequence of these operations.
\end{definition}

We note that the number of equivalence classes $T(N)$ is always greater than $\frac{N!}{8N^2}$ and approaches this value asymptotically \cite{oeis}. 

\begin{proposition}\label{prop equiv DT}
For a model with dihedral symmetry that is time reversible, the number of equivalence classes $[\sigma]_{DT} \subseteq\SN$ is exactly the number, $T(N)$, of classes of equivalent permutation arrays of dimension $N$.
\end{proposition}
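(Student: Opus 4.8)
The plan is to exhibit an explicit bijection between the equivalence classes $[\sigma]_{DR}$ of Corollary~\ref{cor equiv classes}(iii) (the classes relevant to a dihedrally symmetric, time reversible model) and the classes of equivalent permutation arrays, by translating each of the four array operations into an operation on $\SN$. First I would set up the standard correspondence between a permutation $\sigma\in\SN$ and its permutation array $P_\sigma$, whose $(i,j)$ entry is $1$ precisely when $\sigma(j)=i$. This is a bijection between $\SN$ and the set of $N\times N$ permutation arrays, so it remains only to match the two notions of equivalence.

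Next I would compute, one at a time, the effect on $\sigma$ of each generating operation on $P_\sigma$. Writing $c\in D_N$ for the generating rotation (the $N$-cycle shifting positions by one) and $f\in D_N$ for a reflection, a direct check of entries gives: a cyclic shift of columns corresponds to right multiplication by a rotation, $\sigma\mapsto\sigma c$; a cyclic shift of rows corresponds to left multiplication by a rotation, $\sigma\mapsto c^{-1}\sigma$; transposition corresponds to inversion, $\sigma\mapsto\sigma^{-1}$ (since $P_\sigma^{\mathrm{T}}=P_{\sigma^{-1}}$); and a rotation of the square corresponds to a composite of the form $\sigma\mapsto f\sigma^{-1}$ (reflection combined with inversion), while the $180^\circ$ rotation is the conjugation $\sigma\mapsto f\sigma f$. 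Since all powers $c^k$ arise from repeated shifts, the exact orientation conventions chosen here are immaterial.

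I would then argue that the group $G$ of operations generated by these four (acting on $\SN$) has orbits exactly the classes $[\sigma]_{DR}$. Containment in one direction is immediate: each generator is built from left multiplication by $D_N$, right multiplication by $D_N$, and inversion, so every $G$-orbit lies inside a set of the form $D_N\sigma D_N\cup D_N\sigma^{-1}D_N$. For the reverse containment one shows that the generators recover all of $D_N$ on both sides: the cyclic shifts already supply left and right multiplication by every rotation $c^k$, and the key point is that composing the rotation-of-the-square operation with inversion yields left multiplication by the reflection $f$, whence (conjugating by inversion, using $f^{-1}=f$) right multiplication by $f$ as well, and then all reflections $c^k f$. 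Together with inversion this generates precisely the action $\sigma\mapsto d_1\sigma^{\pm 1}d_2$ with $d_1,d_2\in D_N$, whose orbit through $\sigma$ is exactly $[\sigma]_{DR}$ by Corollary~\ref{cor equiv classes}(iii). The bijection of arrays with permutations then descends to a bijection of equivalence classes, giving the count $T(N)$.

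The main obstacle is this last verification: checking that rotation together with transposition and the cyclic shifts generates the reflection part of $D_N$ on \emph{both} the left and the right, and not merely conjugation by a reflection. One must track the dihedral relation $fc=c^{-1}f$ carefully, because using only the $180^\circ$ rotation (conjugation by $f$) would produce the strictly smaller equivalence in which a permutation may be flanked by two rotations or by two reflections but not by one of each; it is precisely the $90^\circ$ rotation --- equivalently, the full dihedral symmetry of the square combined with transposition --- that supplies the missing reflections and forces the two equivalences to coincide. Once this generation statement is established, the remainder of the argument is routine.
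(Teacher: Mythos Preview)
Your proposal is correct and follows essentially the same approach as the paper's proof: set up the bijection $\sigma\leftrightarrow P_\sigma$, translate each array operation (cyclic row/column shift, transposition, rotation) into an operation on $\SN$ built from left/right multiplication by dihedral generators and inversion, and verify that the resulting orbits coincide with $[\sigma]_{DR}$. The paper uses the transposed matrix convention and phrases the key step slightly differently --- it observes directly that vertical and horizontal reflection of the array equal $A_{\sigma s}$ and $A_{s\sigma}$, each realised as a rotation-plus-transposition composite --- whereas you obtain left and right multiplication by $f$ by composing the $90^\circ$ rotation with inversion in the two possible orders; these are the same computation in different clothing, and your identification of this as the ``main obstacle'' is apt.
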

\begin{proof}
Let $\sigma\in\SN$ and $A_{\sigma}$ be the corresponding permutation matrix, that is, $(A_{\sigma})_{ij} \!=\! 1$ if $\sigma(i) \!=\! j$ and $(A_{\sigma})_{ij} \!=\! 0$ otherwise. Firstly observe that $A_{\sigma^{-1}} = (A_{\sigma})^T$ and that if $B$ is the result of rotating a matrix $A$ clockwise by ninety degrees, we have
\[ (B)_{ij} = (A)_{N+1-j,i}\,,\]
so that this rotation is equivalent to transposition of $A$ followed by vertical reflection, and (or) to horizontal reflection followed by transposition.

Define the usual generators $r,s\in D_N$ by
\[ r = (1,2,\ldots N) \,, \quad s = (1,N)(2,N-1)\ldots \left\{ \begin{array}{l} (\tfrac{N}{2},\tfrac{N}{2}+1),\; N \textrm{ even}\\[.4ex]
																				(\tfrac{N-1}{2},\tfrac{N+3}{2}),\; N \textrm{ odd}. \end{array} \right. \]
Then $D_N = \{r^is^j\,:\, i=0,1,\ldots N-1,\, j=0,1 \}$. 

Recalling that for $a,b\in\SN$, $A_{ab} = A_a A_b = $ (the matrix $A_a$ with columns permuted according to $b) = $ (the matrix $A_b$ with rows permuted according to $a$), we see that 
\[ \begin{split}
A_{\sigma r} &= \left(A_{\sigma} \textrm{ with columns shifted cyclically to the left}\,\right); \\
A_{r\sigma} &= \left(A_{\sigma} \textrm{ with rows shifted cyclically up}\,\right); \\
A_{\sigma s} &= \left(A_{\sigma} \textrm{ reflected vertically} = A_{\sigma} \textrm{ transposed}  \textrm{ then rotated by } 90^{\circ}\,\right); \\
A_{s \sigma} &= \left(A_{\sigma} \textrm{ reflected horizontally} =  A_{\sigma} \textrm{ rotated by } 90^{\circ} \textrm{ then transposed} \,\right).
\end{split} \]
Then for any $d_1,d_2\in D_N$, the matrix $A_{d_1\sigma d_2}=A_{r^i s^j \sigma r^k s^{\ell}}$ is the matrix gained from $A_\sigma$ by a sequence of column shifts, rotations, transpositions, and row shifts. Thus the permutation arrays defined by $d_1\sigma d_2$, $d_1\sigma^{-1} d_2$ and $\sigma$ are equivalent.

Conversely, any given permutation array is a permutation matrix $A_{\sigma}$, for some $\sigma\in\SN$, and any matrix equivalent to $A_{\sigma}$ is another permutation matrix $A_{\sigma_m}$, gained from $A_{\sigma}$ by a sequence of moves defining a sequence of permutations as follows: $\sigma_1\!=\!\sigma$  and, for $i=2,3,\ldots m$, $\sigma_i$ is equal to one of $r\sigma_{i-1}$, $s\sigma_{i-1}$, $\sigma_{i-1}r$, $\sigma_{i-1}s$ or $(\sigma_{i-1})^{-1}$. Thus, as $D_N$ is a group, $\sigma_m = d_1 \sigma^{\pm 1} d_2$ for some $d_1 , d_2 \in D_N$, and thus $\sigma_m \in [\sigma]_{DR}$ .
\end{proof}

In practice, the models we consider as biologically reasonable are time reversible with dihedral symmetry.
Thus Corollary~\ref{cor equiv classes} reduces the number of genomes for which we must calculate likelihoods from $\frac{N!}{2N}$ to the order of $\frac{N!}{8 N^2}$.

Propositions~\ref{prop equiv genomes},~\ref{prop equiv D} and \ref{prop equiv DT} provide a link between the corresponding OEIS sequences in terms of genome rearrangement model symmetries. 
This can also be described in terms of classes of permutation matrices as follows.

\begin{definition}

\begin{enumerate}
\item[ ]
\item[(i)] Two permutation matrices are said to be {\em genome-equivalent} if one may be obtained from the other by horizontal reflection, a cyclic shift of rows, or by any  sequence of these operations.
\item[(ii)] Two permutation matrices are said to be {\em D-equivalent} if one may be obtained from the other by vertical or horizontal reflection, a cyclic shift of rows or columns, or by any  sequence of these operations. 
\item[(iii)] Two permutation matrices are said to be {\em DR-equivalent} if one may be obtained from the other by vertical or horizontal reflection, a cyclic shift of rows or columns, transposition, or by any  sequence of these operations. 
\end{enumerate} 
\end{definition}

The next result follows directly from Propositions~\ref{prop equiv genomes} and \ref{prop equiv D} and the proof of Proposition~\ref{prop equiv DT}.

\begin{corollary}
The numbers of classes of (i) genome-equivalent, (ii) D-equivalent, and  (iii) DR-equivalent permutation matrices are given by the numbers of equivalence classes of the form (i) $[\sigma]$, (ii) $[\sigma]_{D}$, and (iii) $[\sigma]_{DR}$, respectively. 
$\hfill\square$
\end{corollary}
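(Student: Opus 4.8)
The plan is to exploit the bijection $\sigma \mapsto A_\sigma$ between $\SN$ and the set of $N \times N$ permutation matrices, and to show that under this bijection the three permutation equivalences $[\sigma]$, $[\sigma]_D$ and $[\sigma]_{DR}$ correspond exactly to genome-, D- and DR-equivalence of the associated matrices. Because $\sigma \mapsto A_\sigma$ is a bijection, a correspondence of equivalence relations carries equivalence classes to equivalence classes bijectively, and hence yields equality of the numbers of classes, which is precisely what the corollary asserts.

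The engine of the argument is the dictionary already established in the proof of Proposition~\ref{prop equiv DT}, which I would first recall. With $r = (1,2,\ldots,N)$ and $s$ the standard reflection generating $D_N$, left multiplication sends $A_\sigma$ to $A_{r\sigma}$, which is $A_\sigma$ with its rows shifted cyclically, while right multiplication gives $A_{\sigma r}$, which shifts the columns cyclically; left multiplication by $s$ gives $A_{s\sigma}$, the horizontal reflection of $A_\sigma$, whereas $A_{\sigma s}$ is its vertical reflection; and $A_{\sigma^{-1}} = (A_\sigma)^T$, so inversion corresponds to transposition. In summary, left multiplication by a generator of $D_N$ realises a cyclic row-shift or a horizontal reflection, right multiplication realises a cyclic column-shift or a vertical reflection, and inversion realises transposition.

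I would then treat the three cases in turn. For (i), the class $[\sigma] = \{d\sigma : d \in D_N\}$ is obtained from $\sigma$ by left multiplication alone; since $D_N = \langle r, s\rangle$, the corresponding matrix moves are exactly cyclic row shifts and horizontal reflections, which is the definition of genome-equivalence. For (ii), $[\sigma]_D = \{d_1 \sigma d_2 : d_1,d_2\in D_N\}$ adds right multiplication, contributing cyclic column shifts and vertical reflections, giving exactly D-equivalence. For (iii), $[\sigma]_{DR}$ further admits $\sigma \mapsto \sigma^{-1}$, contributing transposition, giving exactly DR-equivalence. In each case the orbit of $A_\sigma$ under the permitted matrix moves coincides with $\{A_\tau : \tau \in [\sigma]_x\}$ for the relevant class.

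The main subtlety, rather than a genuine obstacle, is to confirm that the group generated by the permitted matrix moves is neither larger nor smaller than the image of the relevant permutation-group action. On the permutation side this rests on $\langle r, s\rangle = D_N$, used throughout, while on the matrix side one checks that admitting ``any sequence'' of the listed operations generates exactly this group (together with transposition in case (iii)); here one uses that a cyclic shift and its reverse are mutually inverse powers of a single shift, so specifying shifts in only one direction loses nothing, and that transposition commutes through the shifts and reflections appropriately, as verified in the proof of Proposition~\ref{prop equiv DT}. Once the generators are matched, the corollary ``follows directly'' as claimed, since the bijection $\sigma \mapsto A_\sigma$ then transports each permutation equivalence class onto the corresponding class of permutation matrices.
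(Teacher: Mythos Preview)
Your proposal is correct and is essentially the approach the paper intends: the paper offers no proof beyond the remark that the result ``follows directly from Propositions~\ref{prop equiv genomes} and \ref{prop equiv D} and the proof of Proposition~\ref{prop equiv DT}'', and you have faithfully unpacked this by using the matrix--permutation dictionary established in the proof of Proposition~\ref{prop equiv DT} (left action of $r,s$ giving row shifts and horizontal reflection, right action giving column shifts and vertical reflection, inversion giving transposition) and restricting it to the appropriate generating moves in each of the three cases. Your attention to the subtlety that the generated group of matrix moves must match exactly the image of the relevant permutation action is the one point that needs checking, and you handle it correctly by invoking $D_N=\langle r,s\rangle$ and the argument at the end of the proof of Proposition~\ref{prop equiv DT}.
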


\section{Exploiting symmetry in the dihedral sum}\label{sec dihedral}

We now return to the calculation of the path probabilities $\alpha_k(\sigma)$ defined in Section~\ref{sec technique}.
Since one must sum over the dihedral symmetries of a genome,  here we exploit the fact that we are working with {\em linear} representations of $\SN$ and find that, rather than summing over the dihedral group at the final stage of the calculation, it is more efficient to do it earlier. 

\begin{example}\label{eg S5}
Let $\M:=\{(1,2),(2,3),(3,4),(4,5),(5,1)\}\subseteq\mathcal{S}_5$ and define $w(a) = \tfrac{1}{5}$ for each $a\in\M$. 
Then $(\M,w)$ is a time reversible model with dihedral symmetry.
While $\mathcal{S}_5$ has 120 elements, meaning the regular representation of $\mathcal{S}_5$ has dimension 120, we may decompose this into seven irreducible representations with dimensions $1, 4, 5, 6, 5, 4$ and $1$.
In this case, one may obtain exact eigenvalues (in algebraic form) for each irreducible representation of $\s$. 
Thus, extracting the relevant parts of (\ref{eq alpha sum 1}), 
\[ \beta_k(\sigma) =\tfrac{1}{N!}\sum_{p\in\Pa} D_p \,\sum_{\lambda_{p,i}} (\lambda_{p,i})^k \tr(\rho_p(\sigma^{-1}) E_{p,i})\,,\]
we can obtain exact expressions for path probabilities. For example, for $\sigma_1 \!=\! e$, $\sigma_2 \!=\! (1,2,3,4,5)$, and  $\sigma_3 \!=\! (1,3,5,2,4)$, and for even values of $k$, we have
\[ 
\begin{array}{l}
\begin{split}
	\beta_k(\sigma_1)= \tfrac{1}{60. 5^k} (5^k + 5 + 6 \sqrt{5}^{k} + 10 ((1 +\sqrt{5})^{k} + & (1-\sqrt{5})^{k}) \\
	&+ \tfrac{8}{2^k}((5+\sqrt{5})^{k} + (5-\sqrt{5})^{k}) )\,;
	\end{split}\\ 
 \begin{split}
	\beta_k(\sigma_2) = \tfrac{1}{120. 5^k} (2.5^{k} +10+12\sqrt{5}^k -& 5((1+\sqrt{5})^{k+1} + (1-\sqrt{5})^{k+1}) \\
	&+ \tfrac{16\sqrt{5}}{2^k}((5+\sqrt{5})^{k-1} -(5-\sqrt{5})^{k-1}))\,; \end{split} \\
 \begin{split} 
	\beta_k(\sigma_3) = &\tfrac{1}{120. 5^k} (2.5^{k} + 10 +12\sqrt{5}^k + 20((1+\sqrt{5})^{k-1} + (1-\sqrt{5})^{k-1}) \\
	&- \tfrac{4\sqrt{5}}{2^k}((5+\sqrt{5})^{k} -(5-\sqrt{5})^{k})- \tfrac{4}{2^k}((5+\sqrt{5})^{k} + (5-\sqrt{5})^{k})) \,.
	\end{split}
	\end{array}
	\]
For odd values of $k$, $\beta_k(\sigma_i)$ is zero in each of these  cases.

Now these three permutations all represent the same genome, as $(1,2,3,4,5)$ and $(1,3,5,2,4)$ are both rotations of the identity, $e$. Summing these three expressions together with those for the other seven symmetries of $e$ gives the much simpler expression
\begin{equation}\label{eq S5 pp(e)}
	\alpha_k(e) = \left\{ \begin{array}{cl} \tfrac{1}{6. 5^k}(5^k + 5) \,, & k \textrm{ even}; \\
	0 \,,				&  k \textrm{ odd}\,. \end{array} \right.  
\end{equation}
\hfill $\diamond$
\end{example}

To see that we may in general bypass the calculation of the $\beta_k(\sigma)$, we denote by $\D$ the formal sum of the elements of the dihedral group in the group algebra $\C[\SN]$, that is, $\D:=\sum_{d\in D_N} d$. 
Then from equation (\ref{eq alpha sum 1}),
\begin{align}
\alpha_k(\sigma)	
					&= \tfrac{1}{N!}\sum_{d\in D_N}\, \sum_{p\in\Pa} D_p \,\sum_{\lambda_{p,i}}\, (\lambda_{p,i})^k \tr(\rho_p(\sigma^{-1} d) E_{p,i})\nonumber\\ 
					&= \tfrac{1}{N!}\sum_{p\in\Pa} D_p \,\sum_{\lambda_{p,i}}\, (\lambda_{p,i})^k \,\sum_{d\in D_N}\tr(\rho_p(\sigma^{-1} d) E_{p,i})\nonumber\\ 
					&= \tfrac{1}{N!}\sum_{p\in\Pa} D_p \,\sum_{\lambda_{p,i}}\, (\lambda_{p,i})^k \,\tr(\rho_p(\sigma^{-1} \D) E_{p,i})\,,\label{eq alpha sum 2}
\end{align}
where we have simply swapped the order of the finite sums and applied linearity of the trace.

As well as generally reducing the amount of computation (and thus reducing error in numerical computations), the form of (\ref{eq alpha sum 2}) allows us to identify certain partitions $p$ such that  $\rho_p(\D)\!=\!0$, which, in turn, implies $\tr(\rho_p(\sigma^{-1}\D))=\tr(\rho_p(\sigma^{-1})\rho_p(\D))=0$ for all choices of $\sigma$. 
We show that this can be achieved via a quite straightforward calculation applying Frobenius' character formula (see for example \cite[Chapter 4]{fulhar}), as follows.

The key observation is that the dihedral group acts trivially on the element $\mathbf{d}\in \C[\SN]$.
That is, for each $d'\in D_N$ we have
\[
d'\mathbf{d}=\sum_{d\in D_N}d'd=\sum_{d\in D_N}d=\mathbf{d}.
\]
This observation of course translates to the irreducible representations, so that $\rho_p(d'\mathbf{d})=\rho_p(\mathbf{d})$ for all partitions $p$.
Thus if we consider, for each irreducible representation $p$ of $\SN$, the restriction of $\rho_p$ to the dihedral group $D_N$,\footnote{simply defined via the matrices $\rho_p(d)$ for $d\in D_N$} we see that: 
\begin{equation}
\label{eq:deq0}
\left(\rho_p \text{ carries no copy of the trivial representation of }D_N\right)\Longrightarrow\rho_p(\mathbf{d})= 0.
\end{equation}

We recall that the irreducible representations of any finite group $G$ have orthogonal characters under the inner product
\[
\langle \chi,\chi' \rangle:=\tfrac{1}{|G|}\sum_{g \in G}\overline{\chi(g)}\chi'(g).
\]
For each partition $p$, the irreducible $\SN$-character $\chi_p$ provides, under restriction to $D_N$, a character for $D_N$.
Although this character need no longer be irreducible in general, the trivial $\SN$-character $\chi_{(N)}\equiv 1$, corresponding to the partition $p\!=\!(N)$, certainly does remain irreducible.
Thus, under the inner product for $D_N$-characters,
\[
\langle \chi_p,\chi_{p'} \rangle_{D_N}:=\tfrac{1}{2N}\sum_{d \in D_N}\overline{\chi_p(d)}\chi_{p'}(d),
\]
we have, applying (\ref{eq:deq0}),
\begin{equation}\label{eq:whydis0}
\langle \chi_{(N)},\chi_{p} \rangle_{D_N}=0 \implies \rho_p(\D)=0.
\end{equation}

We write the conjugacy classes of $\SN$ as $[c_{\mathbf{i}}]$, where $c_{\mathbf{i}}$ is a permutation with cycle structure $\mathbf{i}=(i_1, \cdots , i_N)$; that is, $c_{\mathbf{i}}$ has $i_j$ cycles of length $j$ for each $j$. 

In general, we have
\begin{equation}\label{eq:dsum}
\langle \chi_{(N)},\chi_p \rangle_{D_N} = \sum_{[c_{\ii}]} \left|[c_{\ii}]\right|\chi_p(c_{\ii})
\end{equation}
and by duality,\footnote{$\rho_{p^{*}}(\sigma):=\sgn(\sigma)\rho_p(\sigma)$.} 
\begin{equation}\label{eq:dsum dual}
\langle \chi_{(N)},\chi_{p^*} \rangle_{D_N} = \sum_{[c_{\ii}]} \left|[c_{\ii}]\right|\sgn(c_{\ii})\chi_p(c_{\ii})\,,
\end{equation}
where we write $\sgn\equiv \chi_{(1^N)}$ for the character of the sign representation of $\SN$.

Now via Frobenius' character formula and some tedious algebra, we derived expressions for $\chi_p(c_{\ii})$ in terms of components of $\ii$, for a selection of partitions $p$ (based on empirical observations from calculations of $\rho_p(\D)$ in SageMath):
\begin{equation}\label{eq:char forms}
\begin{aligned}
\chi_{(N-1,1)}(c_{\ii}) 	&= i_1-1\,, \\
\chi_{(N-2,1,1)}(c_{\ii}) 	&= \tfrac{(i_1-1)(i_1-2)}{2} - i_2\,, \\
\chi_{(N-2,2)}(c_{\ii}) 	&= \tfrac{i_1(i_1-3)}{2} + i_2\,, \\
\chi_{(N-3,3)}(c_{\ii}) 	&= \tfrac{i_1(i_1-1)(i_1-5)}{6} + i_2(i_1-1) + i_3\,.
\end{aligned}
\end{equation}

Using the expressions in (\ref{eq:dsum}), (\ref{eq:dsum dual}) and (\ref{eq:char forms}), evaluating on the conjugacy classes of $D_N$, and applying (\ref{eq:whydis0}), we obtained the general results presented in Table~\ref{tab:frob}.

\begin{table}[t]
\begin{tabular}{l|l||l|l}
	\hline
	partition $p$ 	& $\rho_p(\D) = 0$ for & dual partition $p^*$ & $\rho_{p^*}(\D) = 0$ for \\
	\hline \hline
	$(N)$ 			& no $N$		& $(1^N)$			& $N\neq 4k+1$, $k\in \N$ \\
	\hline
	$(N-1,1)$		& all $N$ 		& $(2,1^{N-2})$		& $N\neq 4k+2$, $k\in\N$ \\
	\hline
	$(N-2,2)$		& no $N$ 		& $(2^2,1^{N-4})$	& $N=6$ or $N=4k+3$, some $k\in\N$ \\
	\hline
	$(N-2,1^2)$		& all $N$ 		& $(3,1^{N-3})$ 	& $N=4k+1$, some $k\in\N$ \\
	\hline
	$(N-3,3)$		& $N=6$  		& $(2^3,1^{N-6})$ & no $N$  
\end{tabular}\smallskip
\caption{Irreducible representations of $\SN$ with $\rho(\mathbf{d})=0$.}\label{tab:frob}
\end{table}

As an example, in $\mathcal{S}_5$, the irreducible representations are indexed by the partitions $(5), (4,1), (3,2), (3,1^2), (2^2,1), (2,1^3)$ and $(1^5)$.
Checking Table~\ref{tab:frob}, we see that $\rho_p(\D) = 0$ for $p= (4,1), (3,1^2)$ and $(2,1^3)$, so one need only calculate the sum (\ref{eq alpha sum 2}) over four partitions rather than seven. 
For such small values of $N$, this result can greatly reduce computation, however, for all cases up to $N\!=\!16$, we found that the number of irreducible representations $\rho_p$ such that $\rho_p(\D) \!=\! 0$ does not increase above four or five. 
Thus these results, although algebraically interesting, do not appear to significantly reduce computation for large values of $N$. 
However, they remain useful in providing some known partial trace values against which we may check our calculated numerical values.
Additionally, the general principle remains that computing with the matrices $\rho_p(\D)$ (as in (\ref{eq alpha sum 2})) is superior to summing over $D_N$ independently (as in (\ref{eq alpha sum 1})).

\section{Implementation and results}\label{sec implement}

In this section, we describe the practical process of implementing the above ideas, and present a selection of our results. 
All algebraic calculations were undertaken in the open source package SageMath \cite{sage}, installed on an instance of the Nectar Research Cloud running Ubuntu 16.04.2 with 64 gigabytes of available RAM. 
SageMath has the inbuilt capability to calculate the irreducible representations of symmetric group elements.\footnote{underlying code written by Franco Saliola} 
All plots and maximum likelihood estimates were produced using R~\cite{R} running on a standard desktop machine. 
Our code and the complete set of results are provided in the supplementary material.

Beyond the raw number of regions, $N$, the main drivers of computational complexity in the calculation of the likelihood function $L(T|\sigma)$ are the dimensions of the irreducible representations of $\mathcal{S}_N$.
Hence we note that, theoretically, the feasibility of the calculation is independent of the choice of model. 
We verify that this holds in practice by undertaking calculations for two distinct biological models $(\mathcal{M}_1,w_1)$ and $(\mathcal{M}_2,w_2)$, each of which is defined below.

To specify the full biological model, in each case we set $\dist \equiv \Pois(1)$; that is, we suppose that rearrangement events are distributed in time according to a Poisson distribution with the expected number of events per unit of time $T$ equal to 1. 
Following \cite{jezandpet} and applying (\ref{eq alpha sum 2}) to (\ref{eq gen like}), the likelihood function  now takes the form 
\begin{equation}\label{eq like} \begin{split}
L(T|\sigma) 	&= \sum_{k=0}^{\infty} \alpha_k(\sigma)\,\frac{T^k e^{-T}}{k!}\\
&= \frac{1}{N!}\sum_{k=0}^{\infty} \sum_{p\in\Pa} D_p \,\sum_{\lambda_{p,i}}\, (\lambda_{p,i})^k \,\tr(\rho_p(\sigma^{-1} \D) E_{p,i})\,\frac{T^k e^{-T}}{k!}\\
&= \frac{\;e^{-T}}{N!} \,\sum_{p\in\Pa} D_p \,\sum_{\lambda_{p,i}}\,\tr(\rho_p(\sigma^{-1} \D) E_{p,i}) e^{\lambda_{p,i} T}\,.
\end{split} \end{equation}
Note that, as was observed in \cite{jezandpet}, the infinite sum present in (\ref{eq gen like}) has been converted to a finite sum. 
This is in contrast to approach taken in \cite{mles} where the likelihood functions were approximated by computing $\alpha_k(\sigma)$ for finitely many values and truncating the infinite sum. 

Our first biological model $(\M_1,w_1)$  assumes each adjacent inversion is possible and equally likely:
\[ \M_1 = \{ (1,2), (2,3),  \ldots , (N\!\!-\!\!1,N),(N,1) \} \subseteq \SN\,, \]
together with $w_1(a)\!=\!\frac{1}{N}$ for each $a\in \M_1$.
This model has been considered previously in \cite{attilaand,mles,jezandpet}. 

Our second biological model allows, as well as inversions of adjacent regions, inversions of \emph{three} adjacent regions. That is, we set
\[ \mathcal{M}_2 \!=\! \{ (1,2), (2,3),  \ldots , (N\!-\!1,N),(N,1) \} \cup \{(1,3),(2,4),\ldots,(N\!-\!1,1),(N,2)\} \subseteq \SN\,. \]
Reflecting the empirical observation that inversions of larger regions are statistically less likely \cite{darling08}, we set 
\begin{equation}
\label{eq:w2}
w_2(a) = \left\{ \begin{array}{ll} 
						\tfrac{2}{3}\cdot\tfrac{1}{N}\,,& \quad a = (1,2), (2,3),  \ldots , (N,1)\,;\\[.4ex]
						\tfrac{1}{3}\cdot\tfrac{1}{N}\,,& \quad a = (1,3), (2,4),  \ldots , (N,2)\,. \end{array} \right. 
\end{equation}						
Note that we make these choices for illustrative purposes only and are not claiming that these probabilities are biologically realistic; we simply need to specify {\em some} probabilities in order to make calculations under the models. 

We observe that both models $(\M_1,w_1)$ and $(\M_2,w_2)$ have dihedral symmetry and are time reversible. 

We also need to verify that the representation matrices $\rho_p(\s)$ are  diagonalisable.
This is true as they are symmetric: each of $\M_1$ and $\M_2$ consists entirely of inversions, $a = a^{-1}$, and thus we have
\[ \rho_p(\s)^T = \sum_{a\in\M}\!\!w(a)\rho_p(a)^T = \sum_{a\in\M}\!\!w(a)\rho_p(a)^{-1} = \sum_{a\in\M}\!\!w(a)\rho_p(a^{-1}) = \sum_{a\in\M}\!\!w(a)\rho_p(a) = \rho_p(\s)\,,\]
where we have used the fact that we may choose a basis such that each $\rho_p(\sigma)$ is an orthogonal matrix for each $\sigma\in\SN$ \cite{sagan}. 
In fact, under any model that is time reversible, the irreducible representations of $\s$ are diagonalisable as, for each $a,a^{-1}\in\M$ such that $a\neq a^{-1}$, the expression for $\rho_p(\s)$ will include a term of the form
\[ w(a)\rho_p(a)+w(a^{-1})\rho_p(a^{-1})=w(a)\left(\rho_p(a)+\rho_p(a^{-1})\right) \]
which, as above, is symmetric (recall that time reversibility demands that $w(a)\!=\!w(a^{-1})$).

Under $(\M_1,w_1,\Pois)$, MLEs were calculated in \cite{mles} for genomes with up to 9 regions.
For both $(\M_1,w_1,\Pois)$ and $(\M_2,w_2,\Pois)$, we were able to calculate MLEs for genomes with 11 regions before hitting the limits of our available computational power.

\begin{example}
We return to the case of genomes with five regions (c.f. Example~\ref{eg S5}).

For the biological model $(\M_1,w_1)$, substituting the exact expressions (\ref{eq S5 pp(e)}) for $\alpha_k(e)$ into the first line of (\ref{eq like}), we obtain:
\[ L(T|e) \;=\; \sum_{k=0}^{\infty} \frac{(5^{2k} + 5)}{6\cdot 5^{2k}} \,\frac{T^{2k} e^{-T}}{(2k)!} \;=\; \tfrac{1}{6}e^{-T}(\cosh(T) + 5\cosh(T/5))\,.
\]
This may be rewritten as
\[ L(T|e) \;=\; \tfrac{1}{12}e^{-T}(e^T + e^{-T} + 5(e^{T/5} + e^{-T/5})) \,,
\]
which, comparing with the last line of (\ref{eq like}), shows that the eigenvalues contributing to the likelihood function in this case are $\pm 1$ and $\pm \tfrac{1}{5}$.
In fact, these are the key eigenvalues for all of the genomes with five regions under this model. 
Under this model, $\mathcal{S}_5$ splits into four equivalence classes: two classes each containing 10 permutations (where the permutations in each class represent one genome) and two classes each containing 50 permutations (where the permutations in each class represent 5 genomes).
The remaining three likelihood functions (displayed for a representative taken from each equivalence class) are
\begin{align*} 
L(T|(1,2,3)) \;&=\; \tfrac{1}{12}e^{-T}(e^T + e^{-T} - (e^{T/5} + e^{-T/5})) \,, \\
 L(T|(1,2)) \;&=\; \tfrac{1}{12}e^{-T}(e^T - e^{-T} + (e^{T/5} - e^{-T/5})) \,,\\
L(T|(1,2,4,3)) \;&=\; \tfrac{1}{12}e^{-T}(e^T - e^{-T} - 5(e^{T/5} - e^{-T/5})) \,. 
\end{align*}

The MLEs for these genomes can be found by differentiating the likelihood functions and numerically solving the resulting quintic equations for the unknown $x= e^{-2T/5}$.
Via this procedure, we found that
\begin{itemize}
\item The genome in the equivalence class with representative $e$ has MLE $\widehat{T}=0$ (as is expected for the reference genome).
\item The 5 genomes in the equivalence class with representative $(1,2)$ each have MLE $\widehat{T}=1.82926$.
\item The 5 genomes in the equivalence class with representative $(1,2,3)$ and the genome in the equivalence class with representative $(1,2,4,3)$ do not have an MLE (since the likelihood function has no maximum\footnote{Further examples and discussion of this phenomenom are given below.}).
\end{itemize}

For each equivalence class, repeating these calculations for the biological model $(\M_2,w_2)$  produced precisely the same results. 
In hindsight, the reason for this is clear: for five regions, the additional rearrangements in $\M_2$, e.g. $(3,5)$, produce, up to dihedral symmetry, precisely the same result as a corresponding rearrangement in $\M_1$, vis-\'a-vis $(1,2)$.
Coupled with our particular choice of rearrangement probabilities $w_2(a)$ (\ref{eq:w2}), we see that there is no difference between our two biological models when considered on genomes of five regions. \hfill $\diamond$
\end{example} 

For each $N\!>\!5$, each of our two models produces at least some eigenvalues $\lambda_{p,i}$ that cannot be given in exact (algebraic) form.
However, as can be seen from the last line of (\ref{eq like}), in practice we can (and do) bypass calculation of the path probabilities $\alpha_k(\sigma)$. 
For a given genome represented by a permutation $\sigma\in\SN$ and each irreducible representation $\rho_p$ and eigenvalue $\lambda_{p,i}$, the focus of the calculation is on the  quantities $\tr(\rho_p(\sigma^{-1} \D) E_{p,i})$. 
For each $\sigma$, we refer to the quantities $\tr(\rho_p(\sigma^{-1} \D) E_{p,i})$ as \emph{partial traces}.

As described in \cite[Sec 3]{jezandpet}, one may compute the projections $E_{p,i}$ via the standard technique
\begin{equation}\label{eq:projbad} 
E_{p,i} = \prod_{j \neq i} \frac{\rho_p(\s) - \lambda_{p,j} I}{\lambda_{p,i} - \lambda_{p,j}} \,.
\end{equation}

For $N\!=\!6$, although several of the eigenvalues were not obtainable in algebraic form, we found that these particular eigenvalues had, for all genomes, corresponding partial trace values of zero. 
That is, the only eigenvalues that contributed to the likelihood function were ones that we could calculate in exact algebraic form; in fact, for $(\M_2,w_2)$, the contributing eigenvalues were all rationals. 
A re-examination of the non-zero partial traces confirmed that these were all also obtainable in algebraic form and, in this way, for each model we obtained exact expressions for the likelihood functions for all genomes in $\mathcal{S}_6$. 

The ten equivalence classes in $\mathcal{S}_6$ comprise one class containing 1 genome, one class containing 2 genomes, three classes each containing 3 genomes, two classes each containing 6 genomes, and three classes each containing 12 genomes. 
For the genome $\sigma = (4,6)$, the likelihood functions for genomes in $[\sigma]_{DR}$, under the models $(\M_1,w_1,\Pois(1))$ and  $(\M_2,w_2,\Pois(1))$ respectively, are
\[\begin{split} 
	L_1(T|\sigma)&= \tfrac{e^{-T}}{1560} \, (5 (\, 3 \, \sqrt{13} - 13) e^{-\frac{1}{6} \, T (\sqrt{13} + 1)} - 5 \, (3 \, \sqrt{13} + 13) e^{\frac{1}{6} \, T (\sqrt{13} - 1)} \\
		&\;\;\;- 39 \, (\sqrt{5} - 3) e^{\frac{1}{6} \, T (\sqrt{5} + 1)} + 39 \, (\sqrt{5} + 3) e^{-\frac{1}{6} \, T (\sqrt{5} - 1)}
		- 130 \, e^{-\frac{1}{3} \, T} + 26 \, e^{T})\,; \\[1ex]
L_2(T|\sigma) &= \tfrac{e^{-T}}{60}  \, \left(4 \, e^{\frac{4}{9}T} + 5 \, e^{\frac{1}{9}T} - 6 \, e^{-\frac{5}{9}T} + e^{T} - 4\right)\,.\end{split} \]
The remaining nine likelihood functions for each model are given in the supplementary material.

Implementing (\ref{eq:projbad}) in SageMath \cite{sage}, we were also able to calculate the required projection operators for $N\!=\!7$. 
However, at $N\!=\!8$, the large number of distinct eigenvalues ($>30$) for some of the irreducible representations (and their close proximity to one another) produced catastrophic cancellation and hence large errors in the resulting projection matrices calculated via (\ref{eq:projbad}).\footnote{The errors were easily identified by, for example, summing the projection matrices for a given irreducible representation.} 

As an alternative to the form (\ref{eq:projbad}), a projection $E$ onto an eigenspace of a symmetric matrix may be expressed in terms of the orthonormal eigenvectors $\{v_j\}$ that span the eigenspace: $E\! =\! \sum_{j=1}^r v_j\, v_j^T\,$. 
Then, for each representation matrix $\rho_p(\s)$ and eigenvalue $\lambda_{p,i}$, there exist orthonormal eigenvectors $\{v_{p,i,j}\,:\, j=1,\ldots r\}$ spanning the eigenspace of $\lambda_{p,i}$.
Via a little linear algebra, one obtains
\begin{equation}\label{eq partial calc}
\tr(\rho_p(\sigma^{-1} \D) E_{p,i}) = \sum_{j=1}^r v_{p,i,j}^T \,\rho_p(\sigma^{-1} \D)\, v_{p,i,j}\,. 
\end{equation}

Accordingly, having used SageMath's \texttt{seminormal} representations of the symmetric group up to $N\!=\!7$ (matrices with rational entries), we switched to \texttt{orthogonal} representations at $N\!=\!8$, with the matrices seated in the \texttt{real double field} to enable efficient computation of eigenvectors (along with eigenvalues) via a standard SageMath function. 

Although symmetric matrices have, in theory, real eigenvalues and eigenvectors, the numerical computation of eigenvalues via this method in SageMath produced a mix of real and complex eigenvalues and eigenvectors (albeit with tiny imaginary part). 
Thus a decision was made, for all $N\!\geq\!8$, to implement the calculations in the complex realm, with eigenvalues and partial traces converted to real numbers at the very last step by taking their real part. Note that (\ref{eq partial calc}) is valid for this more general case: we simply substitute conjugate transposes $^\dagger$ for transposes $^T$ in the expression (\ref{eq partial calc}).

Further, eigenvalues computated via this method were much less exact. For example, in $\mathcal{S}_8$, for a $7\times 7$ matrix that we knew to have four distinct integer eigenvalues, SageMath found seven distinct eigenvalues (three pairs of which differed by approximately $10^{-14}$). 
We dealt with this via a binning algorithm which grouped together eigenvalues that we felt ``should have'' been the same. 

This proceeded by sorting the eigenvalues for a given $\rho_p(\s)$, and putting consecutive eigenvalues in the same bin whenever their difference was less than a given tolerance (set at $\tfrac{\;\;10^{-9}}{\!N}$). 
The mean of the eigenvalues in each bin was then taken to be a repeated eigenvalue with the appropriate multiplicity.
Eigenvectors corresponding to the eigenvalues in each bin were also grouped together, and these sets spanning the eigenspaces were orthonormalised using the Gram-Schmidt method to allow the calculation of the partial traces via (\ref{eq partial calc}). 
Figure~\ref{flow} summarises the key steps in our computation structure.

\begin{figure}
\begin{center}
	\includegraphics{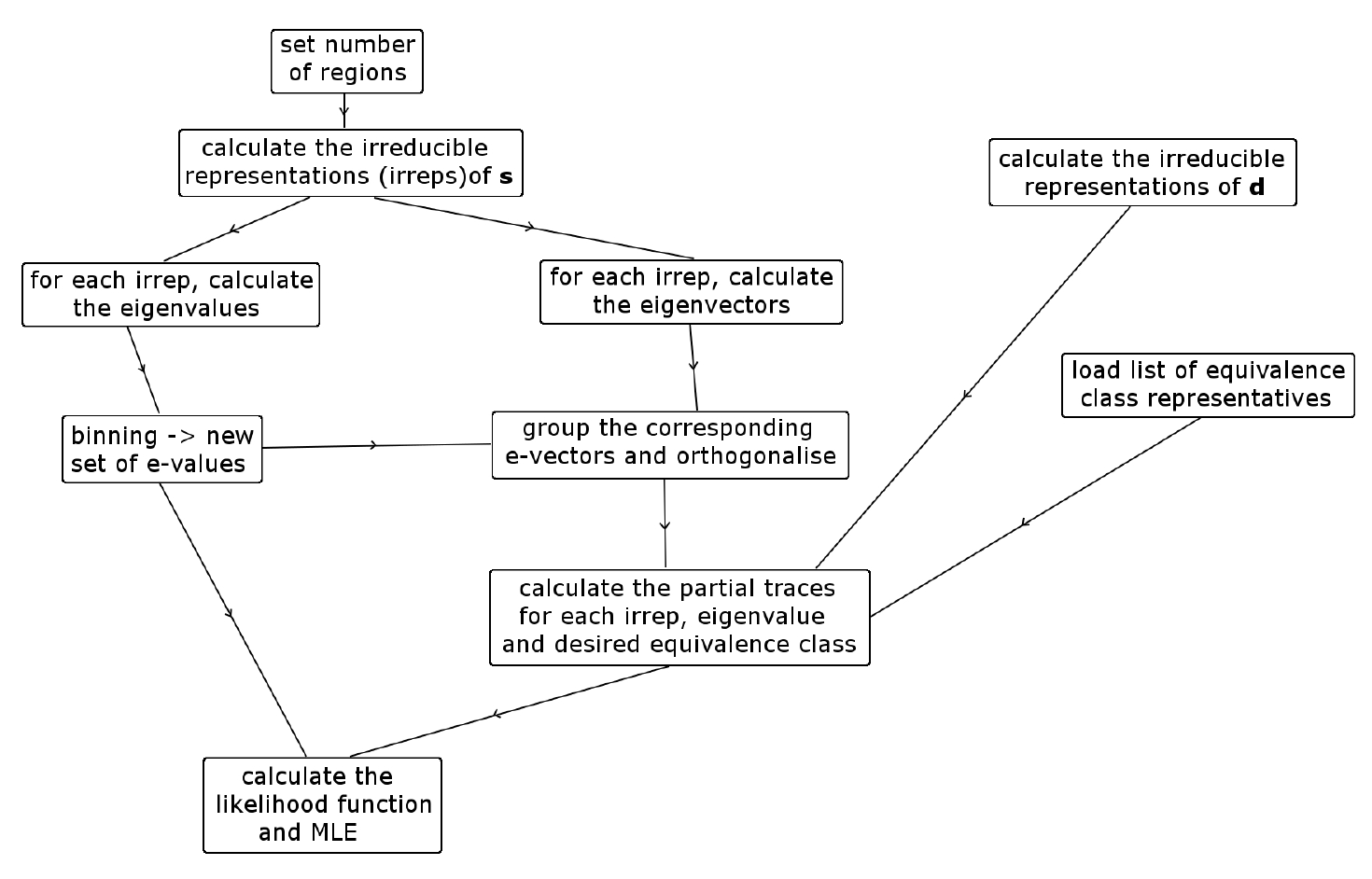}
	\caption{Key steps in computing the MLE. Note that all but the final step were carried out in SageMath; the final calculation of the likelihood function and its maximum were carried out in R.}
	\label{flow}
\end{center}
\end{figure}

Using this approach, we were able to obtain verifiably sensible results for rearrangement models with up to $N\!=\!11$ regions.\footnote{We computed path probabilities $\alpha_k(\sigma)$ both via partial traces (\ref{eq alpha sum 1}) and directly from the irreducible representations (\ref{eq alpha irreps})  --- in the latter case, avoiding eigenvalue/eigenvector estimation --- and these coincide. Additionally, for the cases predicted theoretically by the results of Section~\ref{sec dihedral}, we obtained zero partial trace values (within the expected numerical tolerance).} 
This may not seem a great improvement but one must reflect upon the combinatorial nature of the problem at hand: the $N\!=\!11$ case is $11\cdot 10\cdot 9\cdot 8=79200$ times harder than the $N\!=\!7$ case. 
Further, modulo the approximations necessitated by numerical computation, our results are exact; we have improved on the results obtained in \cite{mles} for $N\!=\!9$ regions without yet introducing any ``on purpose'' approximations. 
In the discussion we will outline future plans for introducing bone-fide numerical approximations.

At $N\!=\!12$, our available RAM was not sufficient for SageMath to compute the irreducible representations of $\s$.
For $\mathcal{S}_{12}$, there are 77 irreducible representations, with a maximum dimension of 7700, compared to 56 irreducible representations, with a maximum dimension of 2376, for $\mathcal{S}_{11}$. 

For all cases up to $N\!=\!9$ we provide, in the supplementary material, a complete set of likelihood function plots and MLEs for each model and every equivalence class of $\mathcal{S}_N$. The proportion of genomes with $N$ regions, up to $N\!=\!9$, that possess an MLE under each of our models is given in Table~\ref{tab:percMLE}. 

\begin{table}[h]
\begin{tabular}{c|ccccc}
$N$ & 5 & 6 & 7 & 8 & 9 \\
  $(\mathcal{M}_1,w_1)$ & 50.0 & 51.7  & 52.8 & 45.8 & 44.6 \\
								 $(\mathcal{M}_2,w_2)$ & 50.0 & 50.0 & 54.7 & 45.4 & 44.0 \\						
\end{tabular}
\medskip
\caption{Percentages of genomes on $N$ regions possessing an MLE.
Results are given for the two biological models $(\mathcal{M}_1,w_1)$ and $(\mathcal{M}_2,w_2)$ described at the start of Section~\ref{sec implement}.}\label{tab:percMLE}
\end{table}

Our proportion of genomes in $\mathcal{S}_9$ with an MLE is slightly higher then the $\sim44\%$ calculated in \cite{mles}. However, given that there are $20 160$ genomes in $\mathcal{S}_9$, even a difference of $0.2\%$ corresponds to around 40 genomes. We have not identified the source of this discrepancy.

For $N\!=\!10$ and $11$ regions, we did not produce results for each of the equivalence classes (numbering $4975$ and $42529$ respectively),
rather calculating MLEs for a sample of genomes in each case. 
Example results for genomes with $N\!=\!10$ regions are presented in Figure~\ref{fig:likes} and Figure~\ref{fig:likes2}.
The plots display the the minimum distance (calculated as the number of rearrangements for which the path probability first attains a non-zero value) as well as the MLE for each case, with the results showing that minimum distance is a poor proxy for true evolutionary time.

Also included in the plots is a measure of the curvature of the likelihood function at the maximum (where this exists).
In order to return interpretable numerical values, we calculated this by taking the negative logarithm of the second derivative of the likelihood function at the maximum.
Given its close relationship to the Fisher information, these values can be interpreted as providing a measure of uncertainty in the MLE.

\begin{figure}
	\centering
	\scalebox{1.0688}{
		\begin{tabular}{cc}
			\includegraphics[scale=.4]{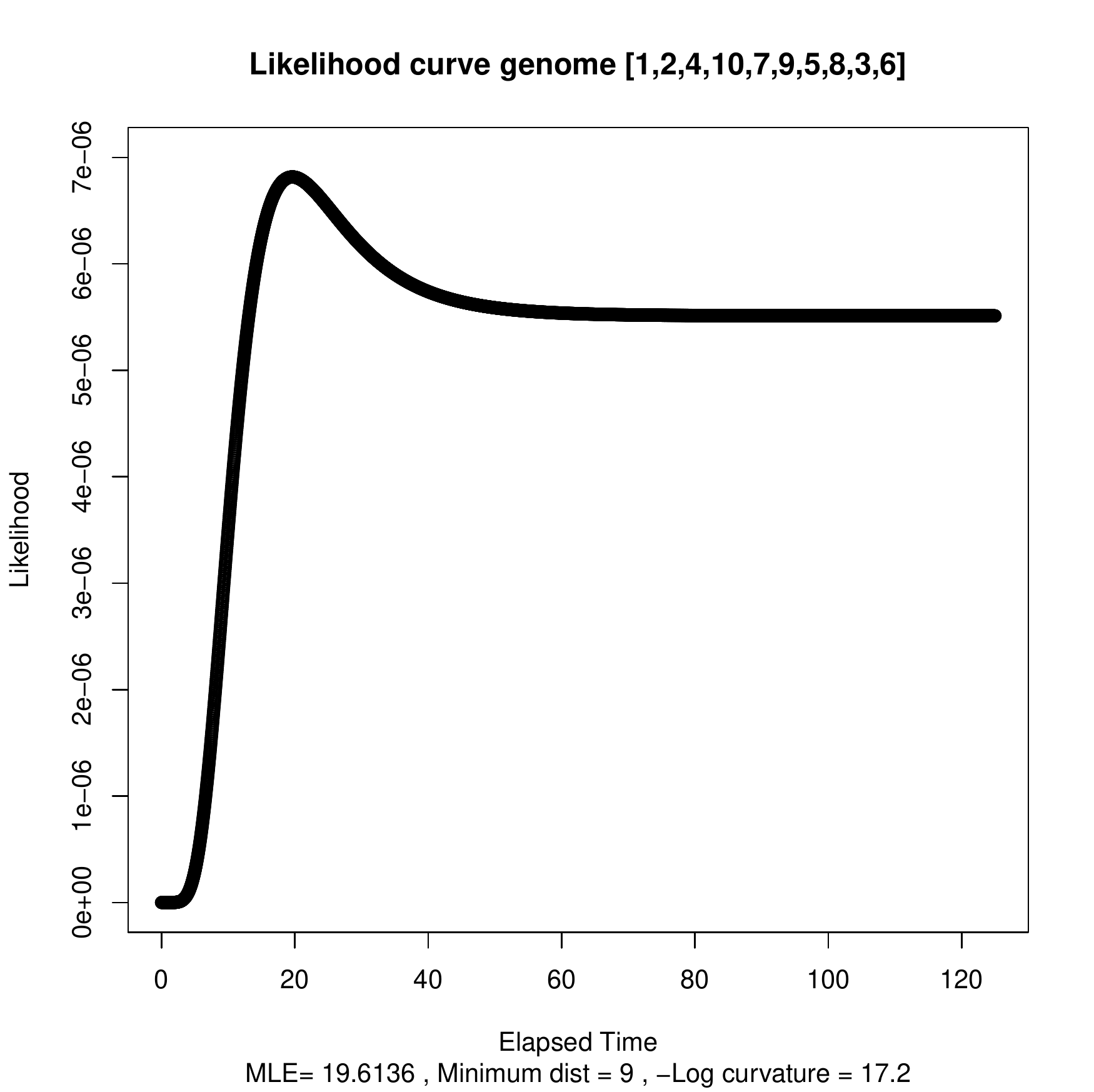} & \includegraphics[scale=.4]{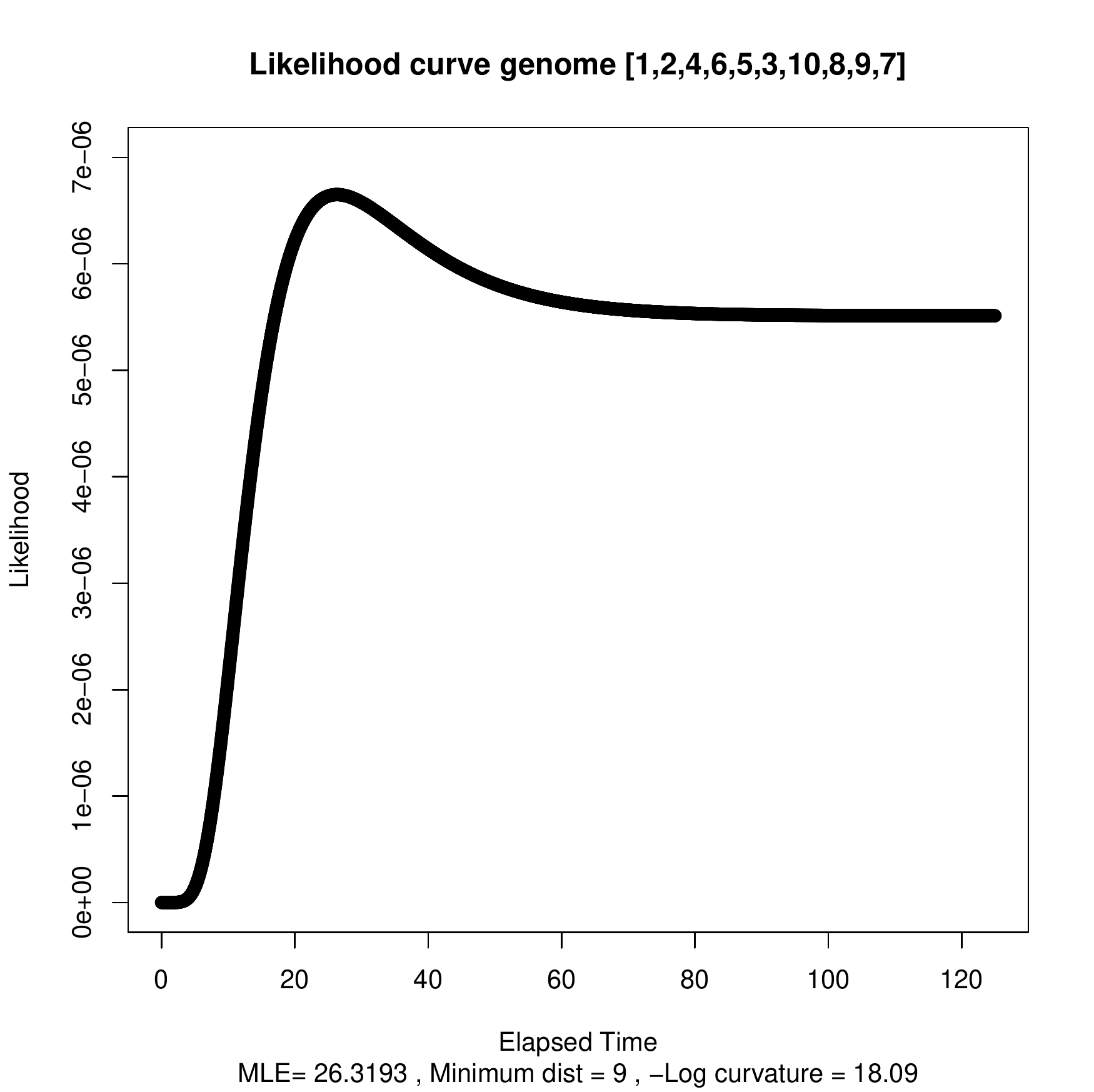}\\ 
			\includegraphics[scale=.4]{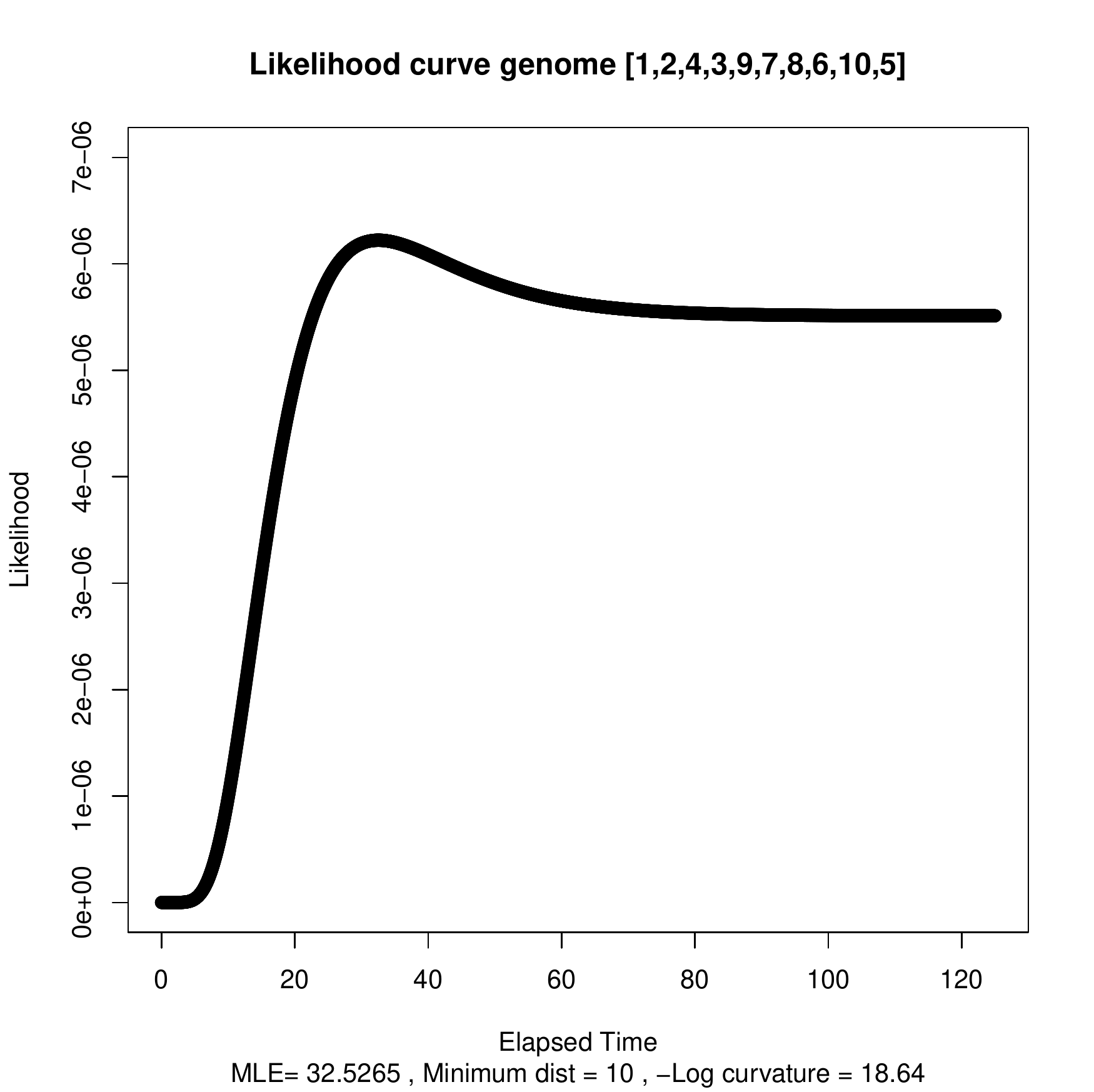} &
			\includegraphics[scale=.4]{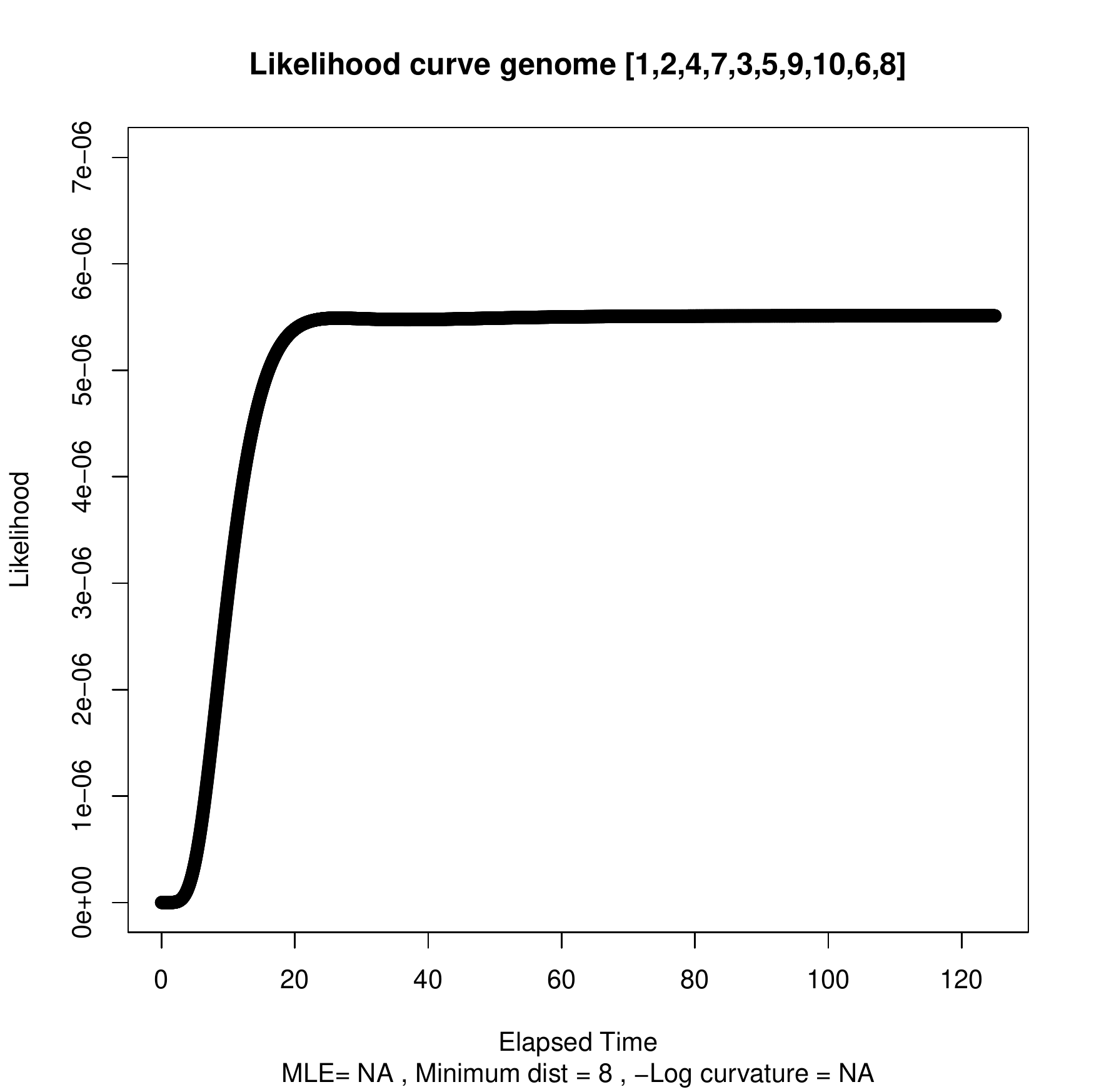}
		\end{tabular}
	}
	\caption{{Likelihood curves of time elapsed for four genomes with ten regions (assuming reference genome $e$ as common ancestor) under $(\mathcal{M}_1,w_1)$.
For ease of comparison, the genomes are represented in one-line notation. For each genome, we have displayed the MLE to the reference, the minimum distance,  and the negative logarithm of the curvature at the MLE.
	The inconsistencies between the maximum likelihood estimates (MLEs) of elapsed time and minimum distances illustrate the importance of taking a likelihood approach to the computation of evolutionary distance. }}
	\label{fig:likes}
\end{figure}

\begin{figure}
	\centering
	\scalebox{1.0688}{
		\begin{tabular}{cc}
			\includegraphics[scale=.39]{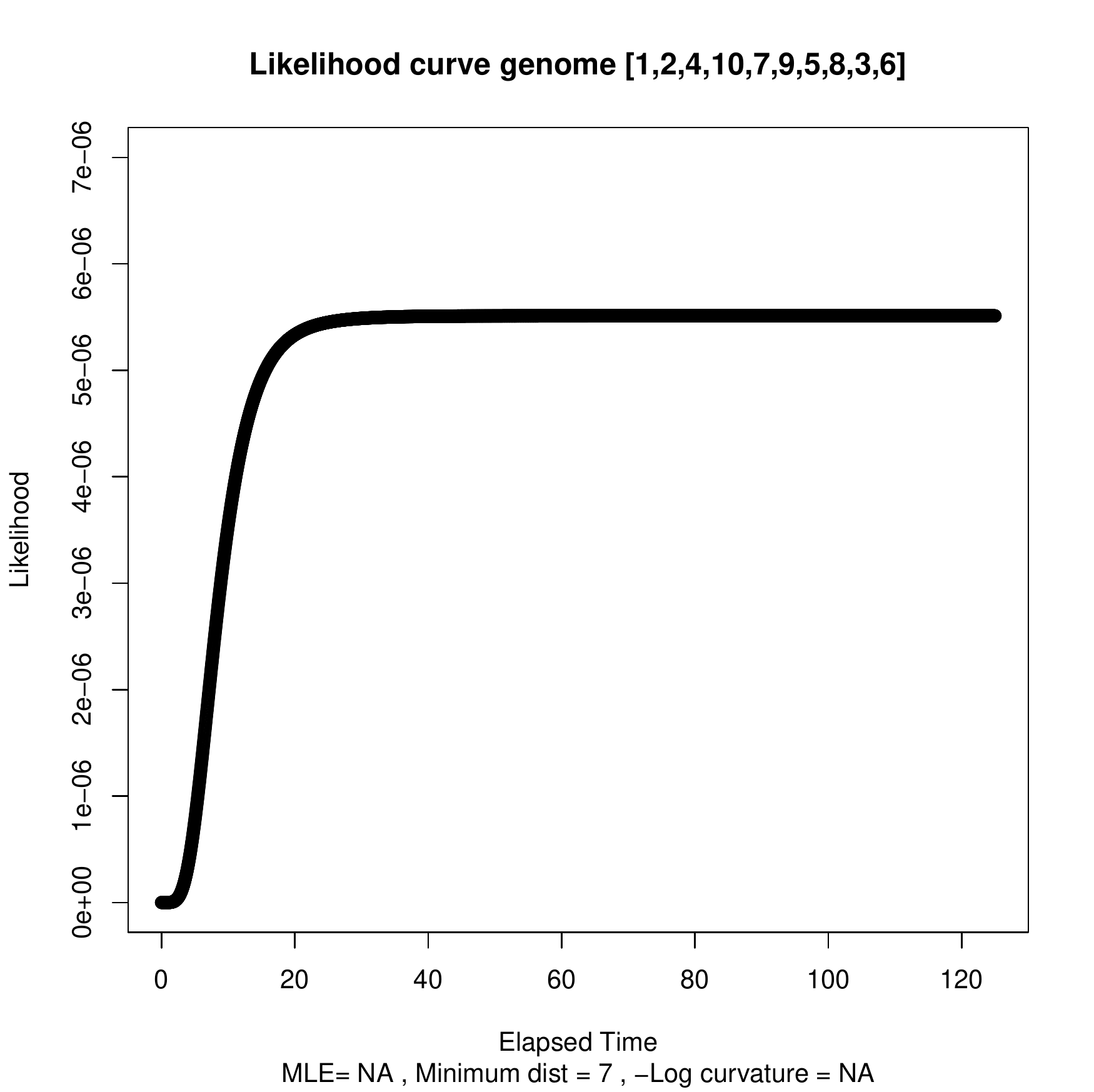} & \includegraphics[scale=.39]{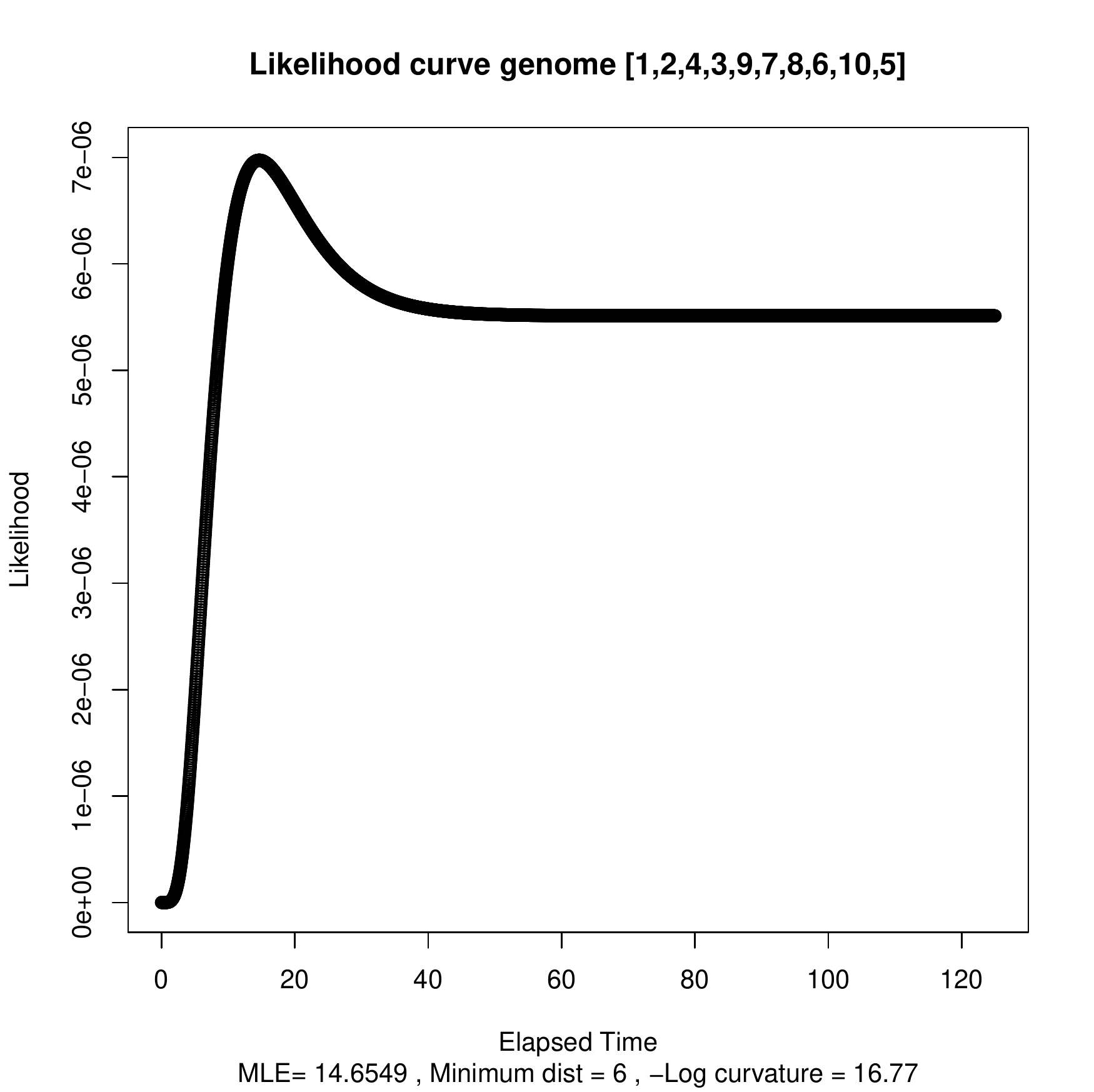}
		\end{tabular}
	}
	\caption{{Likelihood curves of time elapsed for two genomes (c.f. Figure~\ref{fig:likes}) with ten regions (assuming reference genome $e$ as common ancestor) under $(\mathcal{M}_2,w_2)$. }.}
	\label{fig:likes2}
\end{figure}

In particular, note that the MLE clearly distinguishes between the two genomes in Figure \ref{fig:likes} that have a minimal distance of 9. The likelihood curve for the fourth genome does not attain a maximum and hence no reasonable estimate of time elapsed is obtainable.
This information is missed by previous approaches in the field of genome rearrangements which compute minimal distance only (and we ruminate upon this further in the discussion).

Our claim that the maximum likelihood approach provides a much more refined estimate of evolutionary distance than the minimum distance is further supported by an examination of the range of values taken by each distance measure.
In $\mathcal{S}_9$, under each of our models, there are 686 equivalence classes and the minimum distance for any genome under the model $(\M_1, w_1)$ is an integer value between $0$ and $11$. 
Of the equivalence classes, 318 possess an MLE, and each of these 318 MLEs {\em is a distinct value} between 0 and 66.07. Figure~\ref{fig:mlemins} shows the distribution of these MLEs over the minimum distances for $\mathcal{S}_9$. The minimum distance for classes with no MLE ranged between 6 and 11.

\begin{figure}
	\centering
	\scalebox{.75}{
			\includegraphics[scale=.8]{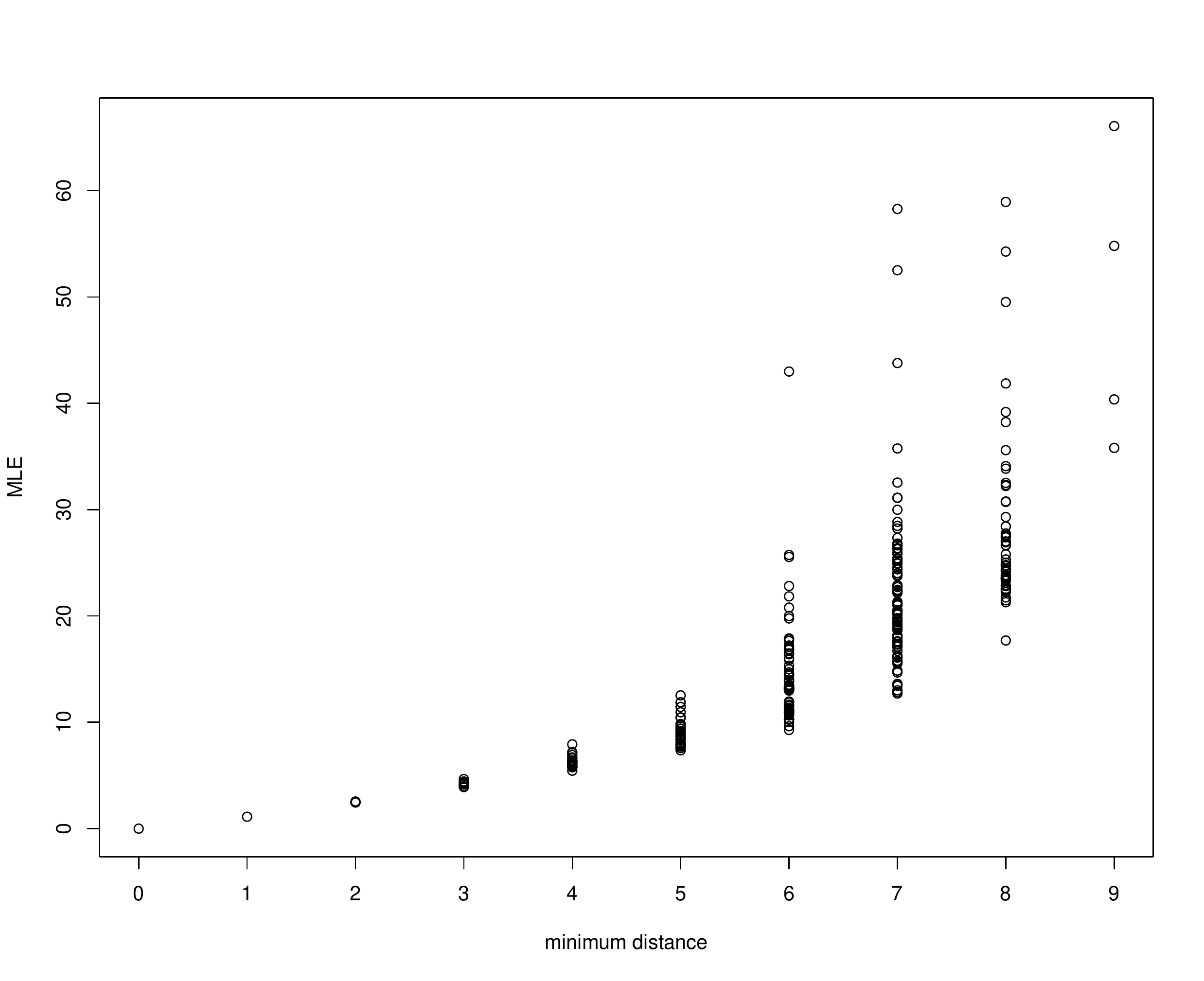}	}
	\caption{A plot of MLEs against mimimum distance for equivalence classes in $\mathcal{S}_9$ possessing an MLE under $(\M_1,w_1)$.}
	\label{fig:mlemins}
\end{figure}

In $\mathcal{S}_9$ under model $(\M_2, w_2)$, the minimum distance took a maximum value of 7 and the MLE a maximum value of 30.46. 
Again, the 316 MLE values obtained in this case were all distinct. 
As a larger rearrangement model will generally result in a reduced range of minimum distances, we again see that \emph{MLE distances give an enormous increase in distinguishability of circular genomes on the basis of their evolutionary relatedness}.

\section{Discussion}

In this work we set out to explore the practical issues that arise when applying representation theory to evolutionary distances calculated under rearrangement models in a maximum likelihood framework.
Using open source computational software and a moderate amount of computing power, we showed that it is plausible to compute,  under general models of rearrangement, maximum likelihood estimates of elapsed evolutionary time for circular genomes with up to eleven regions. 
We worked with two particular models but our results should be replicable for any choice of rearrangements and probability distribution, $(\M,w)$. 

In seeking to avoid unnecessary repetition of calculations, we explored the symmetry of rearrangement models, and found three types of equivalence of permutations in $\SN$, depending on three levels of symmetry that a model may exhibit. 
We defined two permutations to be equivalent under a given model if their likelihood functions coincide. This formulation is equivalent to all path probabilities $\alpha_k$ for the two genomes being equal. Thus the equivalence classes are not dependent on the distance measure being an MLE and remain the same for any alternative measure of distance that is solely based on path probabilities (equivalently, path counts in the case of a uniform probability distribution on $\M$).

We reflect on some some interesting properties of rearrangement models in a likelihood context that are illustrated by our results.
First of all, one may question why around half of the genomes for any number of regions under each model considered were found not to have MLEs --- what does this mean? Is this a feature or a fault of the maximum likelihood approach? 
A precedent that helps us resolve these questions is given nicely by the so-called ``Jukes-Cantor distance correction'' for pairwise DNA alignment distances.
As is well known (see, for example, \cite{felsenstein2004inferring}), under the Jukes-Cantor model of DNA substitution, we may calculate the maximum likelihood estimate of time elapsed as an analytic function of the Hamming distance $\Delta$ and sequence length $r$: 
\[
\widehat{T}=-\tfrac{3}{4}\log(1-\textstyle{\frac{4}{3}}\frac{\Delta}{r}).
\]
Clearly this formula is valid if and only if $\frac{\Delta}{r}\!<\!\tfrac{3}{4}$.
The critical value $\frac{\Delta}{r}\!=\!\tfrac{3}{4}$ corresponds to the case where, relative to sequence length, sufficient time has passed such that the difference between the two sequences is indistinguishable from random noise.
That is, under the model, $\frac{\Delta}{r}\to \tfrac{3}{4}$ as $T\to \infty$ and we can say that these sequences are at ``saturation/equilibrium'' (with respect to the model).

Examining the plots in Figures~\ref{fig:likes} and \ref{fig:likes2}, it is clear that the likelihood curves for genomes $[1,2,4,7,3,5,9,10,6,8]$ and $[1,2,4,10,7,9,5,8,3,6]$ (respectively) are strictly increasing and hence there is no maximum likelihood estimate of elapsed time distinguishable from $\widehat{T}\to \infty$. 
In this situation, we see that, under the respective models, these genomes are at saturation with respect to the reference genome $e$.
The practical consequence of this is that the evolutionary distance from the reference is simply unobtainable for these genomes --- under this modelling scenario, any chance of recovering evolutionary signal has either been lost or was \emph{non-existent in the first place}.
This information is not available using minimal distances as, given that the model set generates the whole of $\SN$, every genome is some minimal distance from every other genome. 
We are of the opinion that this further illustrates the limitations of taking minimal distance as a proxy for true evolutionary distance in the context of rearrangement models.

Further, the information regarding the \emph{uncertainty} of the estimate (via curvature in the likelihood curve around the optimum) presents the opportunity develop a more refined distance-based clustering method that takes into account this uncertainty.

From our current position, we can see several potentially fertile avenues that as yet remain unexplored. 
The most compelling is the introduction of {\em intentional} numerical approximation (as opposed to those necessitated by computing with finite accuracy). This was proposed in \cite{jezandpet}, and has not yet been attempted in a serious fashion. 
It was observed in \cite{jezandpet} that the largest eigenvalues contribute the most to the likelihood, however, with all eigenvalues occurring in the interval $[-1,1]$, and already around 5000 eigenvalues in play over all irreducible representations in $\mathcal{S}_{10}$, trying to distinguish between ``large'' and ``smaller'' eigenvalues may not be the best way forward. 

Having already introduced binning of eigenvalues (as a means of improving accuracy), we intend to explore coarser binning as a means of numerical approximation. This would reduce the number of eigenvalues, and thus the number of terms in the likelihood function, but as the total dimension of the eigenspaces for each irreducible representation does not change, the calculation of the partial traces would still be as computationally intensive, unless we were able to reduce the number of eigenvalues to a level at which we could return to calculating projections via the original method (\ref{eq:projbad}) and thus avoid eigenvectors entirely.  

In a different direction, one sees from the exact expressions for the likelihood functions in $\mathcal{S}_5$ and $\mathcal{S}_6$, and from an inspection of the partial trace values for larger numbers of regions, that the majority of eigenvalues occuring in the theoretical likelihood function do not in fact contribute to the final likelihood function, as their coefficient is zero. If one could determine which eigenvalues will contribute, (or predict which partial traces will be zero {\em without needing to calculate them} --- as we did in Section~\ref{sec dihedral} for whole irreducible representations), the calculation load would be vastly reduced without needing to resort to numerical approximation. 

We leave these possibilities for future work.

\bibliographystyle{plain}
\bibliography{biblio}

\end{document}